\newtheorem{prop}{Proposition}
\newtheorem{coro}{Corollary}
\newcommand{\nondim}[1]{\tilde{#1}}
\newcommand{\rhoc}{\rho_\ast}
\newcommand{\sigman}{\tilde{\sigma}}
\newcommand{\distequal}{\,{\buildrel d \over =}\,}
\newcommand{\Zdrift}[1][\mu]{\ifthenelse{\equal{#1}{0}}{Z}{Z^{[#1]}}}
\newcommand{\ExpBM}[1][\mu]{\ifthenelse{\equal{#1}{0}}{A}{A^{[#1]}}}
\newcommand{\asinh}{\mathrm{asinh}}
\newcommand{\acosh}{\mathrm{acosh}}
\newcommand{\atanh}{\mathrm{atanh}}
\newcommand{\fwd}{\bar{F}_T}
\newcommand{\ExpS}{w}
\newcommand{\drift}{\lambda}
\newcommand{\Hplane}[1]{\mathbb{H}_{#1}}
\newcommand{\qtext}[2][\quad]{#1\text{#2}#1}
\title{Hyperbolic normal stochastic volatility model}
\author[Choi]{Jaehyuk Choi}
\address{Peking University HSBC Business School}
\email{\href{mailto:jaehyuk@phbs.pku.edu.cn}{jaehyuk@phbs.pku.edu.cn}}
\author[Liu]{Chenru Liu}
\address{Department of Management Science and Engineering, Stanford University}
\email{\href{liucr@stanford.edu}{liucr@stanford.edu}}
\author[Seo]{Byoung Ki Seo}
\address{Ulsan National Institute of Science and Technology}
\email{\href{mailto:bkseo@unist.ac.kr}{bkseo@unist.ac.kr}}
\date{September 7, 2018}
\begin{document}

\begin{abstract}
For option pricing models and heavy-tailed distributions, this study proposes a continuous-time stochastic volatility model based on an arithmetic Brownian motion: a one-parameter extension of the normal stochastic alpha-beta-rho (SABR) model. Using two generalized Bougerol's identities in the literature, the study shows that our model has a closed-form Monte-Carlo simulation scheme and that the transition probability for one special case follows Johnson's $S_U$ distribution---a popular heavy-tailed distribution originally proposed without stochastic process. It is argued that the $S_U$ distribution serves as an analytically superior alternative to the normal SABR model because the two distributions are empirically similar.
\end{abstract}

\keywords{stochastic volatility, SABR model, Bougerol's identity, Johnson's $S_U$ distribution}
\maketitle

\section{Introduction}
Stochastic volatility (SV) models have been proposed to overcome the failure of the Black-Scholes-Merton (BSM) model in explaining non-constant implied volatilities across strike prices on option markets, a phenomenon called volatility smile. Therefore, most previous studies (e.g., \citet{hullwhite1987sv,stein1991sv,heston1993closed}) discuss the SV models based on the geometric Brownian motion (BM) (hereafter, lognormal SV models).
On the other hand, the studies on the SV models that are based on the arithmetic BM (hereafter, normal SV models) are scarce. This study aims to fill this gap by proposing and analyzing a class of normal SV models. Our motivation for choosing arithmetic BM as the \textit{backbone} of the SV model is twofold: 
an options pricing model alternative to the lognormal SV model and a skewed and heavy-tailed distribution generated by a continuous-time stochastic process.

\subsection{Options pricing model}
First, the study discusses the aspect of options pricing model. Although eclipsed by the success of the BSM model, the arithmetic BM is analyzed for the first time as an options pricing model by \citet{bachelier} (hereafter, normal model) and still provides more relevant dynamics than the geometric BM for some financial asset classes. 
Refer to \citet{brooks2017option} and \citet{schachermayer2008close} for recent surveys on the normal model. An important difference between them is that the volatility under the normal model (hereafter, normal volatility) measures the uncertainty in terms of the absolute change in the asset price as opposed to relative change.
One example of the applications of the normal model is its use for modeling the interest rate. The proportionality between the daily changes and level of interest rate---a key assumption of the BSM model---is empirically weak~\citep{levin2004rate}. 
Therefore, among fixed-income market traders, the normal model has long been a popular alternative to the BSM model for quoting and risk-managing the options on interest rate swap and Treasury bonds (and futures). For example, the Merrill Lynch option volatility index (MOVE)---the bond market's equivalent of the volatility index (VIX)---is calculated as the weighted average of the implied normal volatilities of the US Treasury bond options. It is also worth noting that the hedging ratio, delta, from the normal and BSM models can often be significantly different, even after the volatilities of the corresponding models are calibrated to the same option price observed on the market. Therefore, the normal model's delta provides a more efficient hedge when the fluctuation of the underlying asset price is more consistent in absolute term than in percentage term.
The use of the normal model for the interest market is further justified by the negative policy rates observed in several developed economies after the global financial crisis of 2008. Other than the interest rate, the normal model is often used for modeling the inflation rate~\citep{kenyon2008inflation} and spread option~\citep{poitras1998spread}.

Despite this background, it is difficult to find previous studies on the normal SV model. It is surprising, given that the lognormal SV models are often analyzed under the normal diffusion framework with the log price transformation; it implies that any existing results on the lognormal SV models can be effortlessly applied to the corresponding normal SV models. To the best of our knowledge, the only previous study on the normal SV model is in the context of the stochastic alpha-beta-rho (SABR) model~\citep{hagan2002sabr}---an SV model popular among practitioners. In the SABR model, the price follows a constant elasticity of variance (CEV) backbone, while the volatility follows a geometric BM. Therefore, the SABR model provides a range of backbone choices, including the normal and lognormal backbones. The SABR model with normal backbone (hereafter, normal SABR) is an important motivation for this study. A detailed review on the SABR model is provided in section~\ref{sec:sabr}.

\subsection{Skewed and heavy-tailed distribution}
The second motivation of our study is that the normal SV models can serve as a means to generate distributions with skewness and heavy-tail, generalizing the normal distribution. Heavy-tailed distributions are ubiquitous and their importance cannot be emphasized enough. In this regard, the study of normal SV models has a much broader significance than that of the lognormal SV models. This is because the latter generalizes the lognormal distribution whose application is limited when compared to the normal distribution.

Several distribution families have been proposed in statistics to incorporate skewness and heavy tails into a normal distribution. Even if the focus is narrowed to the applications to finance, it can be found that numerous distributions have been adopted to describe the statistics of asset return: 
generalized lambda ~\citep{corlu2015modelling}, stable \citep{fama1965behavior}, skewed $t$ \citep{theodossiou1998}, Gaussian mixture \citep{kon1984models,behr2009alt}, generalized hyperbolic \citep{eberlein1995hyperbolic,behr2009alt}, Turkey's $g$- and $h$- \citep{badrinath1988,mills1995modelling}, and Johnson's $S_U$ (hereafter $S_U$) distribution \citep{shang2004modeling,gurrola2007cap,choi2008asymmetric}.

However, the above distributions are neither defined from or associated with stochastic differential equations (SDEs), not to mention the SV models in particular. Those distributions are defined by the probability density function (PDF) or by the transformations of other well-known random variables. This is because it is usually difficult for an SDE to yield an analytically tractable solution. There are only a few examples of continuous-time processes whose transition probabilities correspond to the following well-known probability distributions: the arithmetic BM to a normal distribution (by definition), geometric BM to a lognormal distribution, and CEV and CIR processes to non-central $\chi^2$ distributions.

\subsection{Contribution of this study}
The study proposes and analyzes a class of normal SV models, which includes the normal SABR model as a special case. Since the mathematics behind our model involves the BMs in hyperbolic geometry and the results are expressed by hyperbolic functions, the class is named \textit{hyperbolic normal SV} or \textit{NSVh}~\footnote{The class is named as an abbreviation in a manner similar to the way hyperbolic sine becomes $\sinh$} model. The important mathematical tool to analyze the NSVh model comes from the two generalizations~\citep{alili1997iden,alili1997} of Bougerol's identity~\citep{bougerol1983}. 

The first generalization leads us to a closed-form Monte-Carlo (MC) simulation scheme that no longer needs a time-discretized Euler scheme. The MC scheme requires merely one and a half (1.5) normal random numbers for a transition between time intervals of any length. Although limited to the normal SABR case, this study's scheme is far more efficient than the previous exact MC scheme of \citet{cai2017sabr}. Additionally, the original proof of the first generalization~\citep{alili1997} is simplified in this study.
The second generalization shows that a special case of the NSVh model--- different from the normal SABR model---gives rise to the $S_U$ distribution~\citep{johnson1949systems}, one of the popular heavy-tailed distributions. This allows the study to add to the literature one rare example of analytically tractable SDEs. The normal SV model provides a framework to understand the $S_U$ distribution in a better manner; the distribution can be parametrized more intuitively by using the NSVh parameters, and the popular use of the distribution is explained to some extent. 

Importantly, under the NSVh model framework, two unrelated subjects are brought together, that is, the normal SABR model and the $S_U$ distribution. It is argued and empirically shown that the two distributions are very close to each other when parameters are estimated from the same data set, and can thus be used interchangeably. Among the benefits of the interchangeable usage of the two is the superior analytic tractability of the $S_U$ distribution when \textit{recognized} as an options pricing model---various quantities of interest, such as the vanilla option price, density functions, skewness, ex-kurtosis, value-at-risk, and expected shortfall, have closed-form expressions that are not available in other SV models. To facilitate the interchangeability, a quick method of moments to convert the equivalent parameter sets between the two distributions is proposed.

This remainder of this paper is organized as follows. Section~\ref{sec:model} defines the NSVh model and reviews the SABR model and $S_U$ distribution. Section~\ref{sec:main} describes the main results. Section~\ref{sec:num} presents the numerical results with empirical data. Finally, Section~\ref{sec:conc} concludes the paper.

\section{Models and Preliminaries} \label{sec:model}

\subsection{NSVh Model} \label{sec:nsvh}
The NSVh model is introduced as
\begin{equation}
dF_t = \sigma_t \left(\rho\,d\Zdrift[\drift\alpha/2]_t + \rhoc\,dX_t\right) \quad \text{and}\quad
\frac{d\sigma_t}{\sigma_t} = \alpha\; d\Zdrift[\drift\alpha/2]_t,
\end{equation}
where $F_t$ and $\sigma_t$ are the processes for the price and volatility, respectively, $\alpha$ is the volatility of the volatility parameter, $\rho$ denotes the instantaneous correlation between $F_t$ and $\sigma_t$, and $\rhoc = \sqrt{1-\rho^2}$. The BMs $Z_t$ and $X_t$ are independent, and $\Zdrift_t = Z_t + \mu\, t$ denotes BM with drift $\mu$. 

The role of the model parameters $\rho, \alpha$, and $\drift$ is discussed. Similar to the lognormal SV models, correlation $\rho$ accounts for the asymmetry in the distribution, that is, skewness or volatility skew. The leverage effect---the negative correlation between the spot price and volatility seen in the equity market---is achieved by a negative $\rho$, although it is in the context of the normal volatility in the NSVh model. The parameter $\alpha$ accounts for the heavy tail, that is, excess kurtosis or volatility smile. It can be easily seen that the process converges to an arithmetic BM in the limit $\alpha\rightarrow 0$ regardless of $\rho$. Therefore, $\alpha$ affects both the skewness and heavy tail at the same time. 

The parameter $\drift$ is present in the drift of $Z_t$ for both $F_t$ and $\sigma_t$. With regards to the volatility process, $\drift$ controls the power of $\sigma_t$ that becomes a martingale as a geometric BM:
\begin{gather*}
\frac{d(\sigma_t)^{1-\drift}}{(\sigma_t)^{1-\drift}} = (1-\drift)\alpha \; dZ_t \quad \text{if} \quad \drift\neq 1,\\
d(\log \sigma_t) = \alpha\; dZ_t  \quad \text{if} \quad \drift = 1.
\end{gather*}
For example, $\drift=0$ yields the volatility $\sigma_t$, following a driftless geometric BM, as in the SABR model, and $\drift=-1$ yields the variance $\sigma_t^2$, following a driftless geometric BM as in the SV model of \citet{hullwhite1987sv}. With regards to the price process, however, the drift prevents $F_t$ from being a martingale except for $\drift=0$ or, less importantly, $\rho=0$, although the expectation is easily computed as $\bar{F}_t = F_0 + (\sigma_0 \rho/\alpha) \big( e^{\drift \alpha^2 t/2} - 1 \big)$.
Therefore, the resulting process may not be desirable as a price process. The NSVh model for $\drift\neq 0$ is understood as a probability distribution perturbed from the $\drift=0$ case, by applying the Radon-Nikodym derivative with respect to $\Zdrift[0]_t$. Essentially, the introduction of $\drift$ does not significantly diversify the shape of the distribution, and, therefore $\drift$ is not meant for parameter estimation. As we shall see, however, $\drift$ plays an important role in \textit{model selection}; it brings under one unified process the three subjects separately studied: the normal SABR model ($\drift=0$), Johnson's $S_U$ distribution ($\drift=1$), and BM on three-dimensional hyperbolic geometry ($\drift=-1$).

To provide a background for the main result in section~\ref{sec:main}, we simplify the SDEs into the canonical forms,
\begin{equation}
d\nondim{F}_s = \nondim{\sigma}_s (\rho\,d\Zdrift[\drift/2]_s + \rhoc\,dX_s) \qtext{and}
\frac{d\nondim{\sigma}_s}{\nondim{\sigma}_s} = d\Zdrift[\drift/2]_s \quad (\nondim{\sigma}_0 = 1),
\end{equation}
where the following changes of variables are used:
$$ s = \alpha^2 t, \quad \sigman_s = \frac{\sigma_t}{\sigma_0},
\qtext{and} \nondim{F}_s = \frac{\alpha}{\sigma_0} \big(F_t - \fwd\big).$$
Here, the new variable $s$ is the integrated variance of the log volatility, $\nondim{F}_s$ and $\nondim{\sigma}_s$ are the non-dimensionalized price and volatility processes, respectively, under $s$. The scaling of $\nondim{F}_s$ and $\nondim{\sigma}_s$ naturally follows from the time change of the BMs with the new variable $s$. The time $T$ is any fixed time of interest such as the time-to-expiry of the vanilla option. The price $F_t$ is shifted by $\fwd$ first to ensure that $E(\nondim{F}_S)=0$ at the corresponding time $S=\alpha^2 T$. Therefore, the canonical NSVh distribution is effectively parametrized by the three parameters---$(S,\; \rho,\; \drift)$. Additionally, the original distribution is recovered by $F_T = (\sigma_0/\alpha) \nondim{F}_{\alpha^2 T} + \fwd$ and $\sigma_T = \sigma_0\, \nondim{\sigma}_{\alpha^2 T}$. While the original and canonical representations are explicitly distinguished, the variable $S$ is often used in the original form as well for the sake of concise notation. 

The stochastic integrals of the canonical forms up to $s=S$ are, respectively, expressed as
\begin{equation} \label{eq:F_S}
\nondim{F}_S \; =\; \rho\,\big( e^{\Zdrift[(\drift-1)/2]_S} -  e^{\frac12\drift S} \big) + \rhoc\, X_{\ExpBM[(\drift-1)/2]_S} \qtext{and}
\nondim{\sigma}_S = \exp\left(\Zdrift[(\drift-1)/2]_S\right).
\end{equation}
It must be noted that the integral of $\nondim{\sigma}_s$, with respect to $X_s$, is further simplified to the BM time-changed with an exponential functional of the BM defined by
\begin{equation} \label{eq:ExpBM}
	\ExpBM_T = \int_{t=0}^T e^{2\Zdrift_t} dt \quad (\ExpBM[0]_T = \ExpBM[ 0]_T).
\end{equation}
This quantity has been the topic of extensive research; see \citet{matsuyor2005exp1, matsuyor2005exp2, yor2012exp} for a detailed review. While the functional is originally defined as the continuously averaged price under the BSM model in Asian options, it is used for the time-integral of the variance in the context of this study. Although $\ExpBM_T$ can be defined with \textit{any} standard BM, we implicitly assume that $\ExpBM_T$ is tied to a particular BM, $Z_t$, throughout this study. Essentially, $\ExpBM_T$ and $Z_T$ are closely intertwined, and the knowledge of their joint distribution of is the key to solve Equation~(\ref{eq:F_S}).

\subsection{SABR Model and Hyperbolic Geometry} \label{sec:sabr}
The SABR model is reviewed with a focus on the normal backbone along with the BM on hyperbolic geometry, which serves as a mathematical tool for the NSVh and SABR models. The SABR model~\citep{hagan2002sabr} is an SV model with the backbone of the CEV process:
\begin{equation}
\frac{dF_t}{F_t^\beta} = \sigma_t \, (\rho dZ_t + \rhoc dX_t) \qtext{and} \frac{d\sigma_t}{\sigma_t} = \alpha\, dZ_t,
\end{equation}
where $X_t$ and $Z_t$ are independent BMs. As mentioned earlier, the normal SABR model with $\beta=0$ is equivalent to the NSVh model with $\drift=0$. 

The SABR model has been widely used in the financial industry, for covering fixed income in particular, due to several merits: (i) arbitrary backbone choice, including normal ($\beta=0$) and lognormal ($\beta=1$) ones, (ii) availability of an approximate but fast vanilla options pricing method~\citep{hagan2002sabr}, and (iii) parsimonious and intuitive parameters. The comments on those merits are presented in order. Regarding the CEV backbone, the popularity of the SABR model provides another evidence that the lognormal backbone of the BSM model is not a one-fits-all solution. The normal SABR in this study allows the negative value of $F_t$ without any boundary condition at zero. It should not be confused with the continuous limit of $\beta\rightarrow 0^+$, which does not allow a negative value. In the original article, \citet{hagan2002sabr} derives an approximate formula for the implied BSM volatility, from which the option price can be quickly computed through the BSM formula. However, it is worth noting that the normal volatility is first obtained from the small-time perturbation of the normal diffusion even for $\beta\neq 0$ and subsequently it is converted to the BSM volatility by another approximation~\citep{hagan1999equiv}. Therefore, the option price computed from the normal volatility and the normal model formula has been considered more accurate because the second approximation can be avoided. Since the normal volatility is more appropriate for this study, the normal volatility approximation for $\beta=0$ is presented for reference and later use:
\begin{equation} \label{eq:hagan}
\begin{gathered}
\sigma_{N}(\sigma_0, \alpha, \rho, K) = \sigma_0 \left(\frac{\zeta}{\chi}\right)\left(1+\frac{2-3\rho^2}{24}\alpha^2 T\right)\\
\text{where}\quad \zeta = \frac{\alpha}{\sigma_0}(F_0-K)\quad\text{and}\quad\chi = \log \left( \frac{\sqrt{1-2\rho\zeta+\zeta^2}-\rho+\zeta}{1-\rho} \right),
\end{gathered}
\end{equation}
where $K$ is the strike price, and $T$ denotes the time-to-expiry. The volatility approximation is an asymptotic expansion that is valid when $\alpha^2 T$ is small; therefore, the accuracy of the approximation noticeably deteriorates with an increase in $\alpha^2 T$. Despite the shortcoming, the inaccurate approximation does not cause problems in pricing vanilla options because the model parameters $\sigma_0, \alpha, \rho$ and the pre-determined $\beta$ are to be calibrated to the option prices observed from the market. In this regard, the implied volatility formula rather serves as an interpolation method for the volatility smile. Inaccurate approximation starts causing issues only when the usage of the model goes beyond vanilla options pricing. Two such cases are as follows:(i) claims with an exotic payout (e.g., quadratic) that require the knowledge of PDF, and (ii) path-dependent claims, which must resort to an MC simulation. In the first case, the PDF implied from \citet{hagan2002sabr}'s formula often results in negative density at out-of-the-money strikes, thereby allowing arbitrage. In the second case, the vanilla option price from the formula is not consistent with that from the MC simulation with the same parameters. Therefore, the parameter calibration for MC scheme should be performed with extra care. Hence, the research on the accurate option analytics and efficient MC simulation methods come after the SABR model establishes its popularity among practitioners. 

The study reviews prior works on the SABR model. Concerning vanilla options pricing, there have been various improvements to \citet{hagan2002sabr}'s result. A few examples of such studies are
\citet{obloj2007fine,jordan2011vol,balland2013sabr,lorig2015lsv}. However, they remain as approximations. 
The exact pricing is known only for the following three special cases: (i) zero correlation ($\rho=0$), (ii) lognormal SABR ($\beta=1$), and (iii) normal SABR ($\beta=0$).
For the rest of the parameter ranges, no analytic solution is reported. Hence, the finite difference method~\citep{park2014sabr,floc2014fd} is considered the most practical approach.
Concerning the zero-correlation case, the price process can be transformed to the CEV process time-changed with $\ExpBM[-1/2]_T$ in a manner similar to that of Equation~(\ref{eq:F_S}). Thus, the option price is expressed by a multi-dimensional integral representation of $\ExpBM[-1/2]_T$ over the CEV option prices~\citep{schroder1989comp}. Refer to \citet{antonov2013sabr} for the most simplified expression based on the heat kernel on the two-dimensional hyperbolic geometry~\citep{mckean1970upper}, which we introduce below. The solution for the lognormal SABR is expressed in terms of the Gaussian hypergeometric series~\citep{lewis2000sv}.

The options pricing under the normal SABR model depends on the mathematical tools developed for the BMs on hyperbolic geometry, represented as Poincar\'e half-plane. Since the NSVh model also benefits from the same tools, we briefly introduce these tools. The $n$-dimensional Poincar\'e half-plane is denoted by $\Hplane{n}$. Table~\ref{tab:hyp_geo} provides a \textit{quick reference} for the properties of $\Hplane{2}$ and $\Hplane{3}$. The standard BM in a geometry is defined to be the stochastic process whose infinitesimal generator is given by the Laplace-Beltrami operator $\Delta$ of the geometry. The heat kernel $p(t,D)$ is the fundamental solution of the diffusion equation $(\partial_t - \frac12 \Delta) p(t,D) = 0$, and hence the transition probability of the standard BM. Table \ref{tab:hyp_geo} shows the heat kernels on $\Hplane{2}$~\citep{mckean1970upper}, often referred to as the McKean kernel, and $\Hplane{3}$ \citep{debiard1976theoremes}. The analytical expressions for the heat kernels are also known for $\Hplane{n}$, in general; see \citet{grigor1998heat} for derivation. 

\begin{table}
\caption{Properties of the $n$-Dimensional Hyperbolic Geometry Represented by $n$-Dimensional Poincar\'e Half-Plane $\Hplane{n}$ for $n=2$ and $3$. Symbols $\partial_x$ and $\partial_x^2$ are the shortened notations for partial derivative operators $\frac{\partial}{\partial x}$ and $\frac{\partial^2}{\partial x^2}$, respectively.}
\label{tab:hyp_geo}
\begin{center}
\begin{tabular}{|c||c|c|} \hline
	Dimension & $\Hplane{2} = \{(x,z):z>0\}$ & $\Hplane{3} = \{(x,y,z):z>0\}$ \\ \hline\hline
	Metric $(ds)^2$ & $(dx^2+dz^2)/z^2$ & $(dx^2+dy^2+dz^2)/z^2$ \\ \hline
	Volume element $dV$ & $dx\,dz/z^2$ & $dx\,dy\,dz/z^3$ \\ \hline
	Geodesic distance $D$& 
	\multirow{2}{*}{$\acosh \left(\frac{(x'-x)^2+z^2+z'^2}{2z z'} \right)$} &
	\multirow{2}{*}{$\acosh \left(\frac{(x'-x)^2+(y'-y)^2+z'^2+z^2}{2z z'} \right)$} \\
	$(x,\cdot,z)$ to $(x',\cdot,z')$ &  & \\ \hline
	Laplace-Beltrami &
	\multirow{2}{*}{$z^2\,(\partial_x^2 + \partial_z^2)$} &
	\multirow{2}{*}{$z^2\,(\partial_x^2 + \partial_y^2 + \partial_z^2) - z\, \partial_z$} \\
	operator $\Delta_{\Hplane{n}}$ & & \\ \hline
	\multirow{2}{*}{Standard BM} & $dx_t = z_t dX_t,$ & $dx_t = z_t dX_t, \quad dy_t = z_t dY_t,$ \\
	& $dz_t/z_t = dZ_t$ & $dz_t/z_t = dZ_t - dt/2$ \\ \hline 
	Heat kernel $p_n(t,D)$ &
	\multirow{3}{*}{$\displaystyle \frac{\sqrt2 e^{-t/8}}{(2\pi t)^{3/2}} \int_D^\infty\!\!  ds \frac{s e^{-s^2/2t}}{\sqrt{\cosh s - \cosh D}}$} &
	\multirow{3}{*}{$\displaystyle  \frac1{(2\pi t)^{3/2}} \frac{D}{\sinh D} e^{-(t^2+D^2)/2t}$} \\
	for $n=2$ or $3$ & & \\
	$(\partial_t - \frac12 \Delta_{\Hplane{n}}) p_n = 0$& & \\ \hline
\end{tabular}
\end{center}
\end{table}

The standard BM on $\Hplane{2}$ is equivalent to the normal SABR with $\rho=0$ in canonical form, where the $x$-axis is for the price process and the $z$-axis for the volatility process. Naturally, the $\Hplane{2}$ heat kernel has been used for the analysis of the normal SABR. \citet{henry2005general,henry2008analysis} express the vanilla options price under the normal SABR model with a two-dimensional integral, although it is later corrected by \citet{korn2013exact}. \citet{antonov2015mixing} further simplify the price to a one-dimensional integration with an approximation. However, in the absence of efficient numerical schemes to evaluate those integral representations, the normal volatility approximation in Equation~(\ref{eq:hagan})
remains as a practical approach to price vanilla options under the normal SABR model.

The development of MC simulation methods of the SABR dynamics is relatively recent. While several efficient approximations~\citep{chen2012low,leitao2017one,leitao2017multi} have been proposed, an exact simulation method~\citet{cai2017sabr} is available for the following three special cases: (i) $\rho=0$, (ii) $\beta=0$, and (iii) $\beta=1$. The key element in this method is to simulate the time-integrated variance $\ExpBM[-1/2]_S$, conditional on the terminal volatility $Z_S$. The cumulative distribution function (CDF) of the quantity is obtained from the Laplace transform of $(1/\ExpBM[-1/2]_S)\,|\,Z_S$, which has a closed-form expression~\citep{matsuyor2005exp1}.  Given the exact random numbers of $Z_S$ and $\ExpBM[-1/2]_S$, $X_{\ExpBM[-1/2]_S}$ in the normal SABR model is easily simulated as $X_1\sqrt{\ExpBM[-1/2]_S}$ for a standard normal variable $X_1$. Although a heavy Euler scheme is avoided, the method of \citet{cai2017sabr} still incurs a moderate computation cost due to the numerical inversion of the Laplace transform and root-solving for the CDF inversion.

As opposed to the previous studies using $\Hplane{2}$, this study's main result in section~\ref{sec:mc}, based on \citet{alili1997}, exploits $\Hplane{3}$. The pairs $(x_t, z_t)$ and $(y_t,z_t)$ from the standard BM in $\Hplane{3}$ are the two NSVh processes with $\drift=-1$ and $\rho=0$, and it should be noted that the introduction of $\drift$ to the NSVh model makes this connection possible. Despite a higher dimensionality, the heat kernel of $\Hplane{3}$ is given without an integral, and \citet{alili1997} shows that the squared radius $x_t^2+y_t^2$ and $z_t$ can be exactly simulated. Subsequently, $x_t$ (or $y_t$) is extracted via cosine (or sine) projection with random angle. Therefore, the study's simulation scheme can be considered as a hyperbolic-geometry extension of the Box-Muller algorithm~\citep{box1958note} for generating normal random variable, in which the $z$-axis of $\Hplane{3}$ is additionally added.

\subsection{Johnson's Distribution Family} \label{sec:jsu_pre}
\citet{johnson1949systems} proposes a system of distribution families in which a random variable $X$ is represented by the transformations from a standard normal variable $Z$:
\begin{equation} \label{eq:johnson}
\frac{X - \gamma_X}{\delta_X} = f \left(\frac{Z-\gamma_Z}{\delta_Z}\right) \quad\text{for}\quad 
f(x) = \begin{cases}
1/(1+e^{-x}) & \text{for $S_B$ (bounded) family} \\
e^x \quad & \text{for $S_L$ (lognormal) family} \\
\sinh x & \text{for $S_U$ (unbounded) family},
\end{cases}
\end{equation}
where $\gamma_X$ and $\gamma_Z$ are location parameters and $\delta_X$ and $\delta_Z$ are scaling parameters. Although not explicitly included, normal distribution can be considered as a special intersection of the three families in the limit of $\delta_X$ and $\delta_Z$ proportionally going to infinity. Therefore, it is 
often included as $S_N$ (normal) family with $f(x)=x$. The $X$ range is unbounded for $S_U$ and $S_N$, semi-bounded for $S_L$, and bounded for $S_B$. The system is designed in such a way that a unique family is chosen for any mathematically feasible pair of skewness and kurtosis. For a fixed value of skewness, the kurtosis increases in the order of $S_B$, $S_L$, and $S_U$.

Particularly, the $S_U$ family has been an attractive choice for modeling a heavy-tailed data set, and has been adopted in various fields; refer to \citet{jones2014web} and the references therein. Examples in finance includes heavy-tailed innovation in the GARCH model~\citep{choi2008asymmetric}, prediction of value-at-risk~\citep{simonato2011performance,venkataraman2016est}, and asset return distribution \citep{shang2004modeling,corlu2015modelling}.

The $S_U$ distribution has several advantages over alternative heavy-tailed distributions. First, it explains a wide range of skewness and kurtosis. For a fixed value of skewness, it can accommodate arbitrary high values of kurtosis, which is not feasible in the classical approaches to generalize a normal distribution, such as the Gram-Charlier or Cornish-Fisher expansions. Second, many properties of the distributions are available in closed forms: PDF, CDF, skewness, and kurtosis. Third, the parameters are efficiently estimated---refer to \citet{tuenter2001algo} for the moment matching in the reduced form and \citet{wheeler1980quantile} for the quantile-based estimation. Finally, drawing random numbers is easy, which makes $S_U$ distribution ideal for MC simulations, particularly in a multivariate setting~\citep{biller2006multi}. In general, random number sampling is not trivial, even if the distribution functions are given in closed forms.

In addition to the existing merits, our result in section~\ref{sec:jsu} gives a first-class-citizen status to the $S_U$ distribution among other heavy-tailed distributions by showing that it is a solution of a continuous-time SV process, the NSVh model with $\drift=1$. This partially explains why the $S_U$ distribution has been superior in modeling asset return distributions and risk metrics. 

\section{Main Results} \label{sec:main}
The study's main results first present  \citet{bougerol1983}'s identity in the original form. Since the original identity is generalized later, we state it as a Corollary and defer the proof to Proposition~\ref{prop:boug2}.
\begin{coro}[Bougerol's identity]
	\label{coro:bougerol}
	For a fixed time $T$, the following is equal in distribution:
	\begin{equation}
	\int_0^T e^{Z_t} dX_t \;\distequal\; X_{\ExpBM[0]_T} \;\distequal\; \sinh(W_T),
	\end{equation}
	where $X_t$, $W_t$, and $Z_t$ are independent BMs, and $\ExpBM[0]_T$ is defined by Equation~(\ref{eq:ExpBM}).
\end{coro}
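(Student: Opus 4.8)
The plan is to treat the two equalities separately, disposing of the first by a time-change argument and reserving the real work for the second. For the first equality, I would condition on the entire path of $Z$. Since $X$ and $Z$ are independent and the integrand $e^{Z_t}$ is measurable with respect to $Z$, the stochastic integral $\int_0^T e^{Z_t}\,dX_t$ is, conditionally on $Z$, a centered Gaussian variable with variance $\int_0^T e^{2Z_t}\,dt=\ExpBM[0]_T$. On the other hand, $\ExpBM[0]_T$ is itself $Z$-measurable, so conditioning on $Z$ fixes its value; because $X$ is independent of $Z$, the time-changed variable $X_{\ExpBM[0]_T}$ is then also centered Gaussian with the same variance $\ExpBM[0]_T$. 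The two objects therefore share the same regular conditional law given $Z$, and hence the same unconditional law. Equivalently, this is the Dambis--Dubins--Schwarz time change applied to the continuous martingale $s\mapsto\int_0^s e^{Z_t}\,dX_t$, whose quadratic variation is exactly $\ExpBM[0]$.

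The second equality is the substantive part, and here I would exploit the hyperbolic geometry recorded in Table~\ref{tab:hyp_geo}. Setting $z_t=e^{Z_t}$ and $x_t=\int_0^t e^{Z_s}\,dX_s$ realizes precisely the standard Brownian motion on $\Hplane{2}$ started at $(0,1)$, since these satisfy $dx_t=z_t\,dX_t$ and $dz_t/z_t=dZ_t$. Consequently $X_{\ExpBM[0]_T}\distequal x_T$, and it remains to identify the marginal law of the horizontal coordinate $x_T$. I would obtain its density by integrating the McKean heat kernel against the $\Hplane{2}$ volume element over the unobserved vertical coordinate, namely $f(u)=\int_0^\infty p_2\big(T,D(u,z)\big)\,z^{-2}\,dz$ with $\cosh D(u,z)=(u^2+1+z^2)/(2z)$, and then show that this equals the density of $\sinh(W_T)$, which is $(2\pi T)^{-1/2}(1+u^2)^{-1/2}\exp\!\big(-\asinh(u)^2/(2T)\big)$ by the change of variables $u=\sinh w$ applied to $W_T\sim N(0,T)$.

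I expect the evaluation of this marginalization integral to be the main obstacle. Substituting the kernel turns $f(u)$ into a double integral; reversing the order of integration, the inner integral over $z$ runs between the two roots $z_\pm=\cosh s\pm\sqrt{\cosh^2 s-u^2-1}$ of $\cosh D=\cosh s$, and using the factorization $\cosh s-\cosh D=(z-z_-)(z_+-z)/(2z)$ it reduces, after a trigonometric substitution, to an integral of the form $\int(a+b\sin\theta)^{-3/2}\,d\theta$ with $a=\cosh s$. Carrying this through to the claimed \emph{elementary} density, rather than to an elliptic-type expression, is the delicate step where the special structure of $\Hplane{2}$ must be used. As an alternative that sidesteps the McKean kernel, I would work instead in $\Hplane{3}$, whose heat kernel in Table~\ref{tab:hyp_geo} is explicit and free of an integral: the first coordinate of the $\Hplane{3}$ Brownian motion yields a drifted version of the identity, from which the driftless statement would follow by a Girsanov change of measure removing the $-\tfrac12$ drift in $dz_t/z_t$. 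This is essentially the route that the deferred Proposition~\ref{prop:boug2} formalizes, and it is the one I would favor for a clean write-up.
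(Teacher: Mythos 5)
Your handling of the first equality is complete and correct: conditioning on the path of $Z$ (equivalently, the Dambis--Dubins--Schwarz time change applied to $s\mapsto\int_0^s e^{Z_t}\,dX_t$) settles $\int_0^T e^{Z_t}\,dX_t\distequal X_{\ExpBM[0]_T}$. The genuine gap is in the second, substantive equality $X_{\ExpBM[0]_T}\distequal\sinh(W_T)$: neither of your two proposed routes is actually executed. The marginalization of the McKean kernel over the vertical coordinate is exactly the step you yourself flag as ``delicate,'' and it is left as an unevaluated double integral of elliptic type; the $\Hplane{3}$-plus-Girsanov alternative is likewise only gestured at, and even granting Proposition~\ref{prop:boug1} you would still need to integrate out both the Bessel radius and $\Zdrift[0]_T$ and recognize the resulting law as that of $\sinh$ of a centered Gaussian, which you do not do. As written, the identity the corollary asserts is never established.

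The paper's own argument avoids heat kernels entirely and is far more elementary: Corollary~\ref{coro:bougerol} is the $a=0$ case of Proposition~\ref{prop:boug2}, proved by matching SDEs. It\^o's formula gives $d\sinh(W_t+a)=\tfrac12\sinh(W_t+a)\,dt+\sqrt{1+\sinh^2(W_t+a)}\,dW_t$, while $Q_t=e^{Z_t}\bigl(\sinh(a)+\int_0^t e^{-Z_s}\,dX_s\bigr)$ satisfies $dQ_t=\tfrac12 Q_t\,dt+dX_t+Q_t\,dZ_t$, whose martingale part has quadratic variation $(1+Q_t^2)\,dt$ by independence of $X$ and $Z$; hence both processes are weak solutions of $dP_t=\tfrac12 P_t\,dt+\sqrt{1+P_t^2}\,d\widetilde{W}_t$ from the same starting point $\sinh(a)$ and agree in law, and the time reversal $s\mapsto T-s$ identifies $Q_T$ with $\sinh(a)e^{Z_T}+\int_0^T e^{Z_t}\,dX_t$. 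Note also that you have the attribution reversed: the $\Hplane{3}$/Girsanov route you sketch is the proof of Proposition~\ref{prop:boug1} given in Appendix~\ref{sec:append1}, whereas Proposition~\ref{prop:boug2} --- the result to which the corollary's proof is actually deferred --- is the elementary SDE argument above. To make your version stand on its own you would either have to carry out the $\Hplane{2}$ kernel integration explicitly or replace that computation with the SDE-matching step.
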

This identity is surprising in that the stochastic integral involving two independent BMs is equal in distribution to the $\sinh$ transformation of one BM. Refer to \citet{matsuyor2005exp1,vakeroudis2012bougerol} for a review and related topics. The identity should be interpreted with caution; the equality holds as distribution ($\distequal$) at a fixed time $t=T$, not as a process for $0\le t\le T$. Moreover, it does not directly help to solve Equation~(\ref{eq:F_S}). The identity must be generalized to non-zero drift, $\ExpBM[\mu]$, and provide the joint distribution with $\Zdrift[\mu]$, which is found in \citet{alili1997} and \citet{alili1997iden}. In the following subsections, we apply two generalizations to the NSVh model; one to the general $\drift$, and the other to a special case $\drift=1$.

\subsection{Monte-Carlo Simulation Scheme} \label{sec:mc}
The first generalization makes use of the BMs in $\Hplane{3}$ to obtain the distribution of $X_{\ExpBM_T}$, conditional on $\Zdrift_T$. We restate Proposition 3 in \citet{alili1997} in a modified and stronger form:
\begin{restatable}[\textbf{Bougerol's identity in hyperbolic geometry}]{prop}{boughyp}
\label{prop:boug1} Let $X_t$ and $Z_t$ be two independent BMs and the function $\phi$ defined by
\begin{equation}
\phi(Z,D) = e^{Z/2} \sqrt{2\cosh D-2\cosh Z} \qtext{for} Z\le D,
\end{equation}
then the following is equal in distribution, conditional on $\Zdrift_T$:
\begin{equation}
\int_0^T e^{\Zdrift_t} dX_t \distequal X_{\ExpBM_T} \distequal \cos\theta\; \phi\hspace{-0.25em}\left(\Zdrift_T, \sqrt{R_T^2 + (\Zdrift_T)^2}\right),
\end{equation}
where $R_t$ is a two-dimensional Bessel process, that is, the radius of a BM in two-dimensional Euclidean geometry, and $\theta$ is a uniformly distributed random angle. The three random variables $R_T$, $\Zdrift[0]_T$, and $\theta$ are independent.
\end{restatable}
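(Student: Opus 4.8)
The plan is to establish the two equalities in turn, always conditional on the terminal value $\Zdrift_T$, and to collapse the whole family of drifts onto the single value $\mu=-1/2$ for which $\Zdrift[-1/2]_t$ is the vertical log-coordinate of the standard Brownian motion on $\Hplane{3}$. The leftmost equality is a time change: since $X_t$ is independent of $Z_t$, conditioning on the entire path of $Z$ makes $\int_0^T e^{\Zdrift_t}\,dX_t$ a centered Gaussian with variance $\int_0^T e^{2\Zdrift_t}\,dt=\ExpBM_T$, which by the Dambis--Dubins--Schwarz theorem is $X_{\ExpBM_T}$ for a fresh Brownian motion, and averaging over the paths of $Z$ sharing the endpoint $\Zdrift_T$ preserves this. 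The drift is then removed for free: conditioning a Brownian motion with drift on its terminal value produces a Brownian bridge whose law does not depend on the drift, so conditional on $\{\Zdrift_T=a\}$ the path $(\Zdrift_t)_{0\le t\le T}$ has the same law for every $\mu$ and in particular agrees with the $\mu=-1/2$ case. Hence the conditional laws of $\ExpBM_T$ and of $\int_0^T e^{\Zdrift_t}\,dX_t$ do not depend on $\mu$, and it suffices to prove the rightmost equality for $\mu=-1/2$.

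For $\mu=-1/2$ I would realize the integrals geometrically. The standard $\Hplane{3}$ Brownian motion $(x_t,y_t,z_t)$ of Table~\ref{tab:hyp_geo} started at $(0,0,1)$ gives $z_t=e^{\Zdrift[-1/2]_t}$ together with $x_T=\int_0^T e^{\Zdrift[-1/2]_t}\,dX_t$ and $y_T=\int_0^T e^{\Zdrift[-1/2]_t}\,dY_t$. Because $X$ and $Y$ are independent of each other and of $Z$, the pair $(x_T,y_T)$ is isotropic given the path of $Z$, so $x_T\distequal\cos\theta\,\sqrt{x_T^2+y_T^2}$ with $\theta$ uniform and independent of the radius; this supplies the $\cos\theta$ projection. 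The geodesic-distance entry of Table~\ref{tab:hyp_geo}, together with the identity $z_T^2+1=2z_T\cosh(\Zdrift[-1/2]_T)$, then gives $\sqrt{x_T^2+y_T^2}=\phi\big(\Zdrift[-1/2]_T,\,D_T\big)$, where $D_T=\acosh\!\big((x_T^2+y_T^2+z_T^2+1)/(2z_T)\big)$ is the geodesic distance travelled.

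The crux, and the step I expect to be the main obstacle, is to identify the conditional law of $D_T$ given $\Zdrift[-1/2]_T$. Weighting the $\Hplane{3}$ heat kernel $p_3(T,D)$ by the volume element and passing to the radial coordinate $r=\sqrt{x_T^2+y_T^2}$ (so that $r\,dr=z_T\sinh D\,dD$ at fixed $z_T$) makes the factor $D/\sinh D$ in $p_3$ cancel the Jacobian $\sinh D$, collapsing the conditional density of $D_T$ to one proportional to $D\,e^{-D^2/2T}$ on $D\ge|\Zdrift[-1/2]_T|$. This is exactly the density of $\sqrt{R_T^2+(\Zdrift[-1/2]_T)^2}$, where $R_T$ is a two-dimensional Bessel radius at time $T$ with Rayleigh density $(r/T)\,e^{-r^2/2T}$ independent of the endpoint; this dimension-three cancellation is the phenomenon that renders the scheme closed-form. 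Substituting this equality in law into $\phi$, restoring the general drift through the bridge reduction, and reading off the mutual independence of $R_T$, $\Zdrift[0]_T$, and $\theta$ from the construction then completes the argument.
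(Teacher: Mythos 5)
Your overall strategy is the paper's own: realize $X_{\ExpBM_T}$ as the horizontal coordinate of a hyperbolic Brownian motion in $\Hplane{3}$, reduce to a single drift value where the $\Hplane{3}$ heat kernel applies, and read off the conditional law of the geodesic distance $D_T$ given the vertical coordinate after the change of variables $r\,dr = z_T\sinh D\,dD$ cancels the $D/\sinh D$ factor in $p_3$. The one methodological difference is harmless and arguably cleaner: you remove the drift by observing that the bridge of a drifted Brownian motion does not depend on the drift, whereas the paper invokes Girsanov's theorem; both are valid ways to see that only one value of $\mu$ needs to be treated.

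There is, however, a concrete error in the value of $\mu$ at which your heat-kernel computation is carried out. You assert that the standard $\Hplane{3}$ Brownian motion of Table~\ref{tab:hyp_geo} gives $z_t = e^{\Zdrift[-1/2]_t}$. It does not: the SDE $dz_t/z_t = dZ_t - \tfrac12\,dt$ integrates, after the \Ito{} correction, to $z_t = e^{Z_t - t} = e^{\Zdrift[-1]_t}$, so the heat kernel $p_3(T,D)$ is the transition density of the process with $\mu=-1$, not $\mu=-1/2$. (The process with $z_t = e^{\Zdrift[-1/2]_t}$ satisfies $dz_t/z_t = dZ_t$, which is the vertical coordinate of the standard BM on $\Hplane{2}$, i.e.\ the normal SABR case $\drift=0$; you have conflated the two geometries.) As written, your key step --- weighting $p_3(T,D)$ by the volume element to obtain the conditional density of $D_T$ given $\Zdrift[-1/2]_T$ --- is computing the conditional law for the wrong process, and the resulting density $\propto D e^{-D^2/2T}$ on $D\ge|\Zdrift[-1/2]_T|$ is not justified. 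The fix is purely notational: carry out the identical computation at $\mu=-1$ (as the paper does, obtaining $\frac{D_T}{T}e^{-\frac{1}{2T}(D_T^2-(\Zdrift[-1]_T)^2)}\,dD_T$, which is exactly the law of $\sqrt{R_T^2+(\Zdrift[-1]_T)^2}$ given $\Zdrift[-1]_T$), and then transfer to general $\mu$ by your bridge argument. With that correction the proof is complete and coincides with the paper's.
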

Refer to Appendix~\ref{sec:append1} for the proof. Among the two original proofs in \citet{alili1997}, the proof exploiting the interpretation with $\Hplane{3}$ is provided. The proposition in this study is stronger than the original because the distribution equality holds, \textit{conditional on} $\Zdrift$, which is implied from the proof without difficulty. The study further simplifies the original proof by using the $\Hplane{3}$ heat kernel. Proposition \ref{prop:boug1} can be directly applied to Equation~(\ref{eq:F_S}) to obtain the transition equation and MC scheme:
\begin{coro} \label{coro:jointmc}
The joint distribution of the NSVh model at a fixed time $S$ is given as 
\begin{equation} \begin{aligned}
\nondim{\sigma}_S &= \exp\left(\Zdrift[(\drift-1)/2]_S\right) \qtext{and} \\
\label{eq:nsvh_price}
\nondim{F}_S &\distequal\; \rho\,\left( e^{\Zdrift[(\drift-1)/2]_S} -  e^{\drift S/2} \right) + \rhoc \cos\theta\, \phi\hspace{-0.25em}\left(\Zdrift[(\drift-1)/2]_S, \sqrt{R_S^2 + (\Zdrift[(\drift-1)/2]_S)^2}\right).
\end{aligned} \end{equation}
Furthermore, the three independent random variables can be simulated as
\begin{equation} \label{eq:R_cos}
\left( Z_S, R_S^2, \; \cos\theta \right) 
\distequal
\left( Z_1\sqrt{S},\; (X_1^2+Y_1^2)S,\; \frac{X_1 \;\text{(or $Y_1$)}}{\sqrt{X_1^2+Y_1^2}} \right),
\end{equation}
where $X_1$ and $Y_1$ are independent standard normals.
\end{coro}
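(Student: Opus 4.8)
The plan is to establish the two claims in turn, both of which follow almost immediately from the results already in hand. The volatility identity is nothing but the second half of Equation~(\ref{eq:F_S}), so no work is required there. For the price, I would begin from the first half of Equation~(\ref{eq:F_S}) and note that the only term not already written as a function of $\Zdrift[(\drift-1)/2]_S$ is the stochastic integral $X_{\ExpBM[(\drift-1)/2]_S}$. I would then invoke Proposition~\ref{prop:boug1} with $\mu=(\drift-1)/2$ and $T=S$, which replaces that term by $\cos\theta\,\phi\big(\Zdrift[(\drift-1)/2]_S,\sqrt{R_S^2+(\Zdrift[(\drift-1)/2]_S)^2}\big)$ and yields Equation~(\ref{eq:nsvh_price}) upon substitution.

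The one delicate point---and the reason the \emph{conditional} strengthening of Proposition~\ref{prop:boug1} is indispensable---is that the $\rho$-term of $\nondim{F}_S$ and the volatility $\nondim{\sigma}_S$ are both deterministic functions of the \emph{same} variable $\Zdrift[(\drift-1)/2]_S$, so the substitution must not disturb this shared dependence. Since Proposition~\ref{prop:boug1} gives the distributional equality conditionally on $\Zdrift[(\drift-1)/2]_S$, the conditional laws agree almost surely; integrating against the law of $\Zdrift[(\drift-1)/2]_S$ then shows that $X_{\ExpBM[(\drift-1)/2]_S}$ may be swapped for the $\cos\theta\,\phi(\cdot)$ expression inside the joint law of $(\nondim{\sigma}_S,\nondim{F}_S)$, with the fresh randomizers $R_S$ and $\theta$ entering only the $\rhoc$-term. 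This is the main obstacle only in the sense that it is the sole step requiring care; everything else is bookkeeping.

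For the simulation scheme~(\ref{eq:R_cos}) I would verify the three marginals together with their mutual independence using standard Gaussian facts. The identity $Z_S\distequal Z_1\sqrt{S}$ is Brownian scaling, and the required drifted variable is recovered by the deterministic shift $\Zdrift[(\drift-1)/2]_S=Z_S+\tfrac{\drift-1}{2}S$. Realizing the two-dimensional Bessel process as the radius of a planar BM whose Cartesian coordinates are independent $N(0,S)$ gives $R_S^2\distequal (X_1^2+Y_1^2)S$. Finally, the polar decomposition of the Gaussian pair $(X_1,Y_1)$---the fact underlying the Box--Muller algorithm---shows that its angle is uniform and independent of its radius, so that $\cos\theta\distequal X_1/\sqrt{X_1^2+Y_1^2}$. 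Because $Z_1$ is drawn independently of $(X_1,Y_1)$ while the radius and angle of $(X_1,Y_1)$ are themselves independent, the triple on the right reproduces exactly the independence of $(R_S,\Zdrift[0]_S,\theta)$ asserted in Proposition~\ref{prop:boug1}, which completes the argument.
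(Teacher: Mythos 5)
Your proposal is correct and follows essentially the same route as the paper, which presents Corollary~\ref{coro:jointmc} as a direct application of the conditional form of Proposition~\ref{prop:boug1} to Equation~(\ref{eq:F_S}), with the simulation scheme~(\ref{eq:R_cos}) justified by Brownian scaling and the polar decomposition of a Gaussian pair (the Marsaglia/Box--Muller idea the paper itself cites). Your explicit remark that the \emph{conditional} strengthening is what permits the substitution while preserving the shared dependence on $\Zdrift[(\drift-1)/2]_S$ is exactly the point the paper relies on implicitly.
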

The simulation method using standard normal variables in Equation~(\ref{eq:R_cos}) is more efficient than drawing $R_S$ and $\theta$ independently because the costly $\cos\theta$ evaluation is avoided. The idea is similar to the Marsaglia polar method~\citep{marsaglia1964polar} for drawing normal random variables. It must be noted that three random numbers---$X_1$, $Y_1$, and $Z_1$---generate two pairs of $\nondim{F}_S$ and $\nondim{\sigma}_S$. Therefore, one draw only requires one and a half (1.5) normal random variables, which is an unprecedented efficiency for any SV model simulation. Particularly, this method is more efficient than the exact SABR simulation of \citet{cai2017sabr}, although it is limited to the normal case. This study's method directly draws $X_{\ExpBM[(\drift-1)/2]_S}$ and $Z_S$, whereas \citet{cai2017sabr} first draws $\ExpBM[1/2]_S$ and $Z_S$, and subsequently $X_{\ExpBM[-1/2]_S}$. Although Equation~(\ref{eq:nsvh_price}) states the transition from $s=0$ to $s=S$, it can handle any time interval from $s=S_1$ to $s=S_2$ ($S_1<S_2$). Therefore, the scheme is ideal for pricing path-dependent claims.

\subsection{$S_U$ Distributions for $\drift=1$} \label{sec:jsu}
This subsection shows that the NSVh distribution for $\drift=1$ is expressed by the $S_U$ distribution and is related to Bougerol's identity generalized to an arbitrary starting point. In the following proposition, Proposition 4 of \citet{alili1997} (or Theorem 3.1 of \citet{matsuyor2005exp1}) is restated. More general results are found in Proposition 1 of \citet{alili1997iden} (or Proposition 2.1 of \citet{vakeroudis2012bougerol}), and we follow the proof therein.
\begin{prop}[\textbf{Bougerol's identity with an arbitrary starting point}] \label{prop:boug2}
For a fixed time $T$ and independent BMs, $X_t$, $Z_t$, and $W_t$, the following is equal in distribution:
\begin{equation} \label{eq:prop3}
\sinh(a)\,e^{Z_T}+\int_0^Te^{Z_t}dX_t \;\distequal\; 
\sinh(a)\,e^{Z_T}+ X_{\ExpBM[0]_T} \distequal \sinh(W_T + a).
\end{equation}
\end{prop}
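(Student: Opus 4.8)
The plan is to handle the two equalities in Equation~(\ref{eq:prop3}) separately, since the first is a routine time change while the second carries the real content. For the first $\distequal$, I would condition on the entire path of $Z$. Because $X$ and $Z$ are independent, conditional on $Z$ the \Ito\ integral $\int_0^T e^{\Zdrift[0]_t}dX_t$ is a centered Gaussian with variance $\int_0^T e^{2\Zdrift[0]_t}dt = \ExpBM[0]_T$, and by the Dambis--Dubins--Schwarz time change so is $X_{\ExpBM[0]_T}$. Hence the two agree in law conditional on $Z$; since $\sinh(a)\,e^{\Zdrift[0]_T}$ is $\sigma(Z)$-measurable, adding it to both sides preserves this conditional (and therefore the unconditional) equality. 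This also lets me freely use the martingale representation $\int_0^t e^{\Zdrift[0]_s}dX_s$ in place of $X_{\ExpBM[0]_t}$ in what follows.

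For the second $\distequal$ I would run a heat-equation argument that matches the infinitesimal generators of both sides. Fix a bounded smooth test function $f$ and set
\[
h(t,a) = E\!\left[f\!\left(\sinh(a)\,e^{\Zdrift[0]_t} + \textstyle\int_0^t e^{\Zdrift[0]_s}dX_s\right)\right], \qquad g(t,a) = E\!\left[f\!\left(\sinh(W_t+a)\right)\right];
\]
the goal is $h(T,a)=g(T,a)$ for every such $f$. The function $g$ is the Brownian heat semigroup applied to $f\circ\sinh$ and started at $a$, so it solves $\partial_t g = \tfrac12\,\partial_a^2 g$ with $g(0,a)=f(\sinh a)$. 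For $h$, write $U_t = \sinh(a)\,e^{\Zdrift[0]_t} + \int_0^t e^{\Zdrift[0]_s}dX_s$; an \Ito\ expansion gives $dU_t = \sinh(a)\,e^{\Zdrift[0]_t}(dZ_t+\tfrac12\,dt)+e^{\Zdrift[0]_t}dX_t$, whence, using $X\perp Z$,
\[
d\langle U\rangle_t = \big(\sinh^2 a + 1\big)e^{2\Zdrift[0]_t}\,dt = \cosh^2\!a\;e^{2\Zdrift[0]_t}\,dt.
\]
Applying \Ito\ to $f(U_t)$ and taking expectations removes the martingale terms and yields $\partial_t h = \tfrac12 E[\sinh(a)\,e^{\Zdrift[0]_t}f'(U_t)] + \tfrac12 E[\cosh^2\!a\,e^{2\Zdrift[0]_t}f''(U_t)]$. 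Differentiating under the expectation with $\partial_a U_t = \cosh(a)\,e^{\Zdrift[0]_t}$ and $\partial_a^2 U_t = \sinh(a)\,e^{\Zdrift[0]_t}$ gives $\tfrac12\,\partial_a^2 h$ equal to the very same expression. Thus $h$ solves the identical heat equation, and since $h(0,a)=f(\sinh a)=g(0,a)$, uniqueness forces $h\equiv g$, in particular $h(T,a)=g(T,a)$, which is the claim.

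The crux of the argument---and the step I would scrutinize most---is the cancellation $\sinh^2 a + 1 = \cosh^2 a$ that makes the quadratic-variation coefficient of $h$ coincide with the coefficient produced by $\partial_a^2$. This hyperbolic identity is precisely why $\sinh$ (and not some other transform) appears on the right and why the starting point enters on the left through $\sinh(a)\,e^{\Zdrift[0]_T}$. The remaining obstacles are technical rather than conceptual: justifying differentiation under the expectation and its interchange with the \Ito\ expansion, checking that the stochastic integrals are genuine (not merely local) martingales so their expectations vanish, and invoking uniqueness for the heat equation within a class of functions whose growth in $a$ is controlled (which requires bounding moments such as $E[e^{2\Zdrift[0]_t}]$). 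Taking $f$ bounded and smooth keeps these estimates clean, and a standard density argument then upgrades the equality of expectations to equality in law.
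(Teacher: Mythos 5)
Your proof is correct, but it takes a genuinely different route from the paper's. The paper promotes both sides to processes: it sets $P_t=\sinh(W_t+a)$ and $Q_t=e^{Z_t}\bigl(\sinh a+\int_0^t e^{-Z_s}\,dX_s\bigr)$, checks that both solve $dY_t=\tfrac12 Y_t\,dt+\sqrt{1+Y_t^2}\,dW_t$ from the common starting point $\sinh a$ (the same identity $1+\sinh^2 a=\cosh^2 a$ that you isolate is what turns the martingale part $dX_t+Q_t\,dZ_t$ into $\sqrt{1+Q_t^2}\,d\widetilde W_t$), and concludes $P\distequal Q$ by uniqueness in law; a time reversal $s\mapsto T-s$ then converts $Q_T$ into $\sinh(a)\,e^{Z_T}+\int_0^T e^{Z_t}\,dX_t$. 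You instead work directly with the un-reversed functional $U_t=\sinh(a)\,e^{Z_t}+\int_0^t e^{Z_s}\,dX_s$ and match backward heat equations in the starting point $a$. That substitution is apt: $U_t$ is not a diffusion in its own filtration (its quadratic variation $\cosh^2(a)\,e^{2Z_t}\,dt$ depends on $Z_t$ rather than on $U_t$), so you could not have matched SDEs for $U$ the way the paper does for $Q$; taking expectations and differentiating in $a$ circumvents this, at the cost of invoking heat-equation uniqueness (harmless here, since $h$ and $g$ are bounded by $\|f\|_\infty$) and the routine justifications for differentiating under the expectation that you already flag. In exchange, the paper's argument needs no PDE machinery but does need the time-reversal step, which your version renders unnecessary, and it delivers the stronger process-level identity $P\distequal Q$, whereas yours gives exactly the fixed-time statement being claimed. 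Your handling of the first equality---conditioning on the path of $Z$ so that both expressions are centered Gaussians with variance $\ExpBM[0]_T$---is the standard argument and supplies what the paper leaves implicit.
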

\begin{proof}
The two processes,
$$ P_t = \sinh(W_t+a) \quad\text{and}\quad Q_t = e^{Z_t} \left( \sinh(a) + \int_0^t e^{-Z_s}dX_s \right),$$
are equivalent because they start from the same starting point $P_0 = Q_0 = \sinh(a)$ and follow the SDE:
$$ dP_t = \frac12 P_t dt + \sqrt{1+P_t^2}\;dW_t \quad\text{and}\quad
dQ_t = \frac12 Q_t dt + dX_t + Q_t\;dZ_t = \frac12 Q_t dt + \sqrt{1+Q_t^2}\;dW_t.
$$
Therefore, $P_t$ and $Q_t$ have the same distribution for any time $t$. The equality between $Q_T$ and the left-most expression is shown by the time-reversal $s \rightarrow T-s$. For a fixed time $T$, $Z_T-Z_{T-s}$ for $0\le s\le T$ is also a standard BM with the same ending point $Z_T$, and therefore it may be replaced with $Z_s$.
\end{proof}
The original Bougerol's identity in Corollary \ref{coro:bougerol} is a special case, with $a=0$. Now, Proposition \ref{prop:boug2} can be applied to further simplify the NSVh distribution for $\drift=1$.
\begin{coro} \label{coro:jsu}
The price of the NSVh model with $\drift=1$ at a fixed time $S$ follows a re-parametrized $S_U$ distribution: 
\begin{equation}\label{eq:FT_nsvh}
\nondim{F}_S \distequal\; \rhoc \sinh\left( W_S + \atanh\, \rho\right) - \rho\, e^{S/2},
\end{equation}
where the original parameters are mapped by 
$$
\delta_Z = \frac{1}{\sqrt{S}}, \quad
\frac{\gamma_Z}{\delta_Z} = -\atanh\,\rho, \quad
\delta_X = \frac{\sigma_0 \rhoc}{\alpha}, \quad \text{and}\quad
\gamma_X = \fwd - \frac{\sigma_0 \rho}{\alpha} e^{S/2}.
$$
It also admits a simpler form:
\begin{equation}\label{eq:FT_nsvh2}
\nondim{F}_S \distequal \sinh (W_S) + \rho\,\left(\cosh (W_S) - e^{S/2}\right).
\end{equation}
\end{coro}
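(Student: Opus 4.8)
The plan is to specialize the canonical integral representation in Equation~(\ref{eq:F_S}) to $\drift=1$ and then recognize the result as a direct instance of Proposition~\ref{prop:boug2}. The decisive observation is that for $\drift=1$ the drift exponent $(\drift-1)/2$ vanishes, so every $\Zdrift[(\drift-1)/2]$ collapses to the driftless $\Zdrift[0]_S = Z_S$ and the exponential functional reduces to $\ExpBM[(\drift-1)/2]_S = \ExpBM[0]_S$. Equation~(\ref{eq:F_S}) then reads $\nondim{F}_S = \rho\,(e^{Z_S} - e^{S/2}) + \rhoc\, X_{\ExpBM[0]_S}$, which is precisely the driftless configuration in which Proposition~\ref{prop:boug2} applies.

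Next I would factor $\rhoc$ out of the random part, writing $\rho\,e^{Z_S} + \rhoc\,X_{\ExpBM[0]_S} = \rhoc\,(\sinh(a)\,e^{Z_S} + X_{\ExpBM[0]_S})$ with $\sinh(a) = \rho/\rhoc$. Solving $\tanh(a) = \rho$ gives $a = \atanh\,\rho$, together with $\cosh(a) = 1/\rhoc$. Proposition~\ref{prop:boug2} yields $\sinh(a)\,e^{Z_S} + X_{\ExpBM[0]_S} \distequal \sinh(W_S + a)$; multiplying back by $\rhoc$ and subtracting the deterministic term $\rho\,e^{S/2}$ produces Equation~(\ref{eq:FT_nsvh}). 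The simpler form in Equation~(\ref{eq:FT_nsvh2}) then follows from the hyperbolic addition formula $\rhoc\sinh(W_S + a) = \rhoc(\cosh a\,\sinh W_S + \sinh a\,\cosh W_S) = \sinh W_S + \rho\cosh W_S$, after substituting $\cosh a = 1/\rhoc$ and $\sinh a = \rho/\rhoc$.

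Finally I would read off the Johnson parameters by matching against Equation~(\ref{eq:johnson}). Writing $W_S \distequal \sqrt{S}\,Z$ for a standard normal $Z$ turns the argument $W_S + \atanh\,\rho$ into $(Z - \gamma_Z)/\delta_Z$ with $\delta_Z = 1/\sqrt{S}$ and $\gamma_Z/\delta_Z = -\atanh\,\rho$. De-canonicalizing through $F_T = (\sigma_0/\alpha)\nondim{F}_S + \fwd$ scales the $\sinh$ amplitude to $\delta_X = \sigma_0\rhoc/\alpha$ and shifts the location to $\gamma_X = \fwd - (\sigma_0\rho/\alpha)\,e^{S/2}$, recovering all four stated mappings.

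I do not anticipate a genuine obstacle; once the central identification $\sinh a = \rho/\rhoc$ (equivalently $a = \atanh\,\rho$) is made, the argument is bookkeeping built on the already-proved Proposition~\ref{prop:boug2}. The one point requiring care is the very first step---recognizing that $\drift=1$ is exactly the value that removes the drift from the exponential functional---because Proposition~\ref{prop:boug2} is stated only for the driftless $\ExpBM[0]$. For any other $\drift$ one would be left with $\ExpBM[(\drift-1)/2]$ and the clean $S_U$ form would not emerge, which is what singles out $\drift=1$ as the distinguished case.
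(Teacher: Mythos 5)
Your argument is correct and follows exactly the route the paper intends: specialize Equation~(\ref{eq:F_S}) to $\drift=1$ so the drift vanishes, factor out $\rhoc$, apply Proposition~\ref{prop:boug2} with $\sinh a=\rho/\rhoc$, and use the identity $\asinh(\rho/\rhoc)=\atanh\rho$ together with the hyperbolic addition formula. The paper's own proof is just a one-line citation of that identity, so your write-up is simply a fully spelled-out version of the same argument, including the correct reading-off of the Johnson parameters.
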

\begin{proof}
The results are easily proved from the following hyperbolic function identities,
$$\asinh\left(\frac{\rho}{\rhoc}\right) = \text{atanh}\,\rho = \frac12\log\left( \frac{1+\rho}{1-\rho}\right).$$
\end{proof}
Although Proposition \ref{prop:boug2} is a well-known result, to the best of our knowledge, this is the first time that it is interpreted in the context of the $S_U$ distribution or SV model. 
Compared to Corollary~\ref{coro:jointmc}, Corollary \ref{coro:jsu} is an even more efficient MC scheme requiring one normal random number for one draw of price, although for a special case $\drift=1$. However, this is at the expense of the terminal volatility $\nondim{\sigma}_S$ being lost. Unlike Corollary \ref{coro:jointmc}, Corollary \ref{coro:jsu} can only generate the final price $\nondim{F}_S$, and therefore cannot be used for path-dependent claims. 

The key consequence of Corollary~\ref{coro:jsu} is that the NSVh model bridges the SDE-based SV model and heavy-tailed distribution. The encounter between the two subjects is mutually beneficial. First, as a solution of the NSVh process, the $S_U$ distribution obtains a more intuitive parametrization than the original in Equation~(\ref{eq:johnson}). From Equation~(\ref{eq:FT_nsvh2}), it is clear that the symmetric heavy tail comes from the $\sinh$ term, controlled by $S$, and the asymmetric skewness from the $\cosh$ term, controlled by $\rho$. The new parametrization also helps to understand the relationship between Johnson family members. The lognormal family $S_L$ is recognized as a special case with $\rho=\pm 1$ ($\rhoc=0$) as $\nondim{F}_S \; \distequal\; \pm\, ( e^{\Zdrift[0]_S}-e^{S/2} )$. The normal family $S_N$ is obtained as $\nondim{F}_S/\sqrt{S}$ in the limit of $S\rightarrow 0$. Under the NSVh parameters, the well-known PDF and CDF of the $S_U$ distribution are respectively expressed by 
\begin{equation} \label{eq:jsu_cdf}
\begin{gathered}
p_{\drift=1}(x) = \frac{n(d)}{\rhoc\sigma_0\sqrt{T}\sqrt{1+\xi^2}} \quad
\text{and}\quad	P_{\drift=1}(x) = N(-d) \\
\text{where}\quad 
	d = \frac{1}{\sqrt{S}}\left( \asinh\left( 
\frac{\alpha}{\rhoc\sigma_0}(\fwd-x) - \frac{\rho}{\rhoc} e^{S/2} 
\right) + \atanh\,\rho \right),
\end{gathered}
\end{equation}

Conversely, from the $S_U$ distribution, the NSVh model obtains analytic tractability for option price and risk measures. Below are the closed-form expressions for the quantities of interest:
\begin{coro} For an asset price following the NSVh process with $\drift=1$, option price, value-at-risk, and expected shortfall have the following closed-form solutions.
\begin{itemize}
	\item The undiscounted price of a vanilla option with strike price $x$:
	\begin{equation} \label{eq:jsu_opt}
	V_\pm(x) = \frac{\sigma_0}{2\alpha} e^{S/2} \left((1+\rho)N(d+\sqrt{S}) - (1-\rho)N(d-\sqrt{S})-2\rho N(d)\right) \pm \left(\fwd-K\right) N(\pm d),
	\end{equation}
	where $d$ is defined in Equation~(\ref{eq:jsu_cdf}) and $\pm$ indicates call/put options, respectively. 
	\item Value-at-risk for quantile $p$:
	\begin{equation} \label{eq:jsu_var}
	\text{VaR}(p) = \fwd - \frac{\sigma_0}{\alpha} 
	\left( \rhoc\, \sinh\left( d\sqrt{S} - \atanh\,\rho \right) + \rho\, e^{S/2} \right) \quad\text{for}\quad d = -N^{-1}(p).
	\end{equation}
	\item Expected shortfall for quantile $p$:
	\begin{equation} \label{eq:jsu_es}
	\text{ES}(p) = \fwd - \frac{\sigma_0 e^{S/2}}{2\alpha p} \Big((1+\rho)N(d+\sqrt{S}) - (1-\rho)N(d-\sqrt{S}) - 2\rho(1-p)\Big).
	\end{equation}
\end{itemize}
\end{coro}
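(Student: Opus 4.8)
The plan is to exploit the fact that, under $\drift=1$, Corollary~\ref{coro:jsu} represents the terminal price as a strictly increasing, deterministic transformation of a single Gaussian variable. Writing $W_S = \sqrt{S}\,Z$ for a standard normal $Z$ and undoing the canonical scaling via $F_T = (\sigma_0/\alpha)\,\nondim{F}_S + \fwd$, Equation~(\ref{eq:FT_nsvh}) becomes $F_T = (\sigma_0\rhoc/\alpha)\sinh(\sqrt{S}\,Z + \atanh\rho) - (\sigma_0\rho/\alpha)e^{S/2} + \fwd$. Because $\sinh$ is monotone, every event of interest---$\{F_T > K\}$ for the option, the $p$-quantile for VaR, and the lower-tail averaging region for ES---reduces to a half-line $\{Z > -d\}$ or its complement $\{Z \le -d\}$, where $d$ is exactly the quantity in Equation~(\ref{eq:jsu_cdf}); this is just the statement $P(F_T \le K) = N(-d)$ already recorded as the CDF.

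For the vanilla option I would compute $V_+(K) = E[(F_T - K)\,\mathbf{1}_{\{Z > -d\}}]$ directly. The only nontrivial piece is $E[\sinh(\sqrt{S}\,Z + \atanh\rho)\,\mathbf{1}_{\{Z > -d\}}]$, which I would evaluate by expanding $\sinh$ into $e^{\pm(\sqrt{S}\,Z + \atanh\rho)}$ and applying the elementary Gaussian identity $E[e^{cZ}\mathbf{1}_{\{Z > a\}}] = e^{c^2/2}N(c-a)$. The two resulting terms carry factors $e^{\pm\atanh\rho} = \sqrt{(1\pm\rho)/(1\mp\rho)}$, which combine with the prefactor $\rhoc = \sqrt{(1-\rho)(1+\rho)}$ to produce exactly the coefficients $(1+\rho)$ and $(1-\rho)$ in Equation~(\ref{eq:jsu_opt}); the constant part of the payoff contributes the $-2\rho N(d)$ correction together with the $(\fwd - K)N(d)$ term. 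The put formula then follows from put--call parity $V_+ - V_- = E[F_T] - K = \fwd - K$ (the centering that fixes $\fwd$ gives $E[F_T]=\fwd$) combined with $N(d) - 1 = -N(-d)$.

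The VaR claim requires no new computation: it is the quantile function of the distribution. Setting $P(F_T \le \text{VaR}(p)) = p$ gives $N(-d) = p$, that is $d = -N^{-1}(p)$, and solving the relation defining $d$ in Equation~(\ref{eq:jsu_cdf}) for $x$ (here $x=\text{VaR}(p)$)---inverting a single $\asinh$---yields Equation~(\ref{eq:jsu_var}). For the expected shortfall, defined as the lower-tail average $\text{ES}(p) = \frac{1}{p}\,E[F_T\,\mathbf{1}_{\{Z \le -d\}}]$ with the same $d = -N^{-1}(p)$, I would reuse the identical Gaussian integral as in the option proof but over the complementary half-line, applying $E[e^{cZ}\mathbf{1}_{\{Z \le a\}}] = e^{c^2/2}N(a-c)$ and the reflection $N(-x) = 1 - N(x)$ to cast the lower-tail normal CDFs into the upper-tail form of Equation~(\ref{eq:jsu_es}); the probability mass $P(Z \le -d) = p$ supplies the $2\rho(1-p)$ grouping.

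The computations are elementary once the $\sinh$-representation is in hand, so there is no conceptual obstacle; the only real care needed is bookkeeping. The step most prone to sign and coefficient errors is the simplification of the $e^{\pm\atanh\rho}$ factors against $\rhoc$ and, for the expected shortfall, the conversion between lower- and upper-tail normal CDFs, so that all three formulas share the common block $(1+\rho)N(d+\sqrt{S}) - (1-\rho)N(d-\sqrt{S})$.
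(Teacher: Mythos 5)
Your proposal is correct and follows essentially the same route as the paper: both derive the option price by integrating the $\sinh$ representation of $\nondim{F}_S$ against the Gaussian density with the half-line boundary $d$, obtain VaR as the quantile function, and your bookkeeping of the $e^{\pm\atanh\rho}$ factors against $\rhoc$ reproduces the $(1\pm\rho)$ coefficients exactly as the paper's expansion of Equation~(\ref{eq:FT_nsvh2}) does. The only cosmetic difference is that for the expected shortfall the paper reuses the already-derived put price via $\text{ES}(p) = \text{VaR}(p) - V_-(\text{VaR}(p))/p$ rather than recomputing the lower-tail Gaussian integral, but the two calculations are identical up to rearrangement.
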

\begin{proof}
The option prices are easily derived by integrating Equation~(\ref{eq:FT_nsvh2}) with the boundary $d$ obtained from Equation~(\ref{eq:FT_nsvh}). The relationship between the put option value and the expected shortfall, $\text{ES}(p) = \text{VaR}(p) - V_-(\text{VaR}(p)) / p$, is useful, where $\text{VaR}(p)$ vanishes in the final expression for $\text{ES}(p)$.
\end{proof}
Later, it is argued that the NSVh distributions with different values of $\drift$ are close to each other, and thus the analytically tractable $\drift=1$ case can represent the rest including the normal SABR model ($\drift=0$). The option price from the closed-form formula serves as a benchmark against which the option price from the MC scheme of Corollary~\ref{coro:jointmc} is compared in Section~\ref{sec:num}.

\subsection{Moments Matching of the NSVh Distribution} \label{sec:mom}
The study derives the moments of the NSVh distribution for general $\drift$ to be used for parameter estimation. The study also proposes a moment matching in the reduced form for $\drift=0$ to complement that for $\drift=1$ by \citet{tuenter2001algo}.

\begin{restatable}{coro}{nsvhmoments}
\label{coro:mom}
The central moments of the canonical NSVh distribution $\nondim{\mu}_n = E(\nondim{F}_S^{\;n})$, for $2\le n\le 4$, are given as
\begin{equation}
\begin{aligned}
\nondim{\mu}_2 =& \rho^2\; \ExpS^{\drift} (\ExpS-1) + \rhoc^2\; \frac{\ExpS^{1+\drift} - 1}{1+\drift} \qtext{for} w=e^S\ge 0, \\
\nondim{\mu}_3 =& \rho^3\; \ExpS^{\frac32\drift } (\ExpS-1)^2(\ExpS + 2) + 3\rho\rhoc^2 \;\ExpS^{\frac12 \drift } \left(\frac{\ExpS^{3+\drift}-1}{3+\drift} - \frac{\ExpS^{1+\drift}-1}{1+\drift} \right),\quad\text{and} \\ 
\nondim{\mu}_4 =& \rho^4\; \ExpS^{2\drift } (\ExpS-1)^2(\ExpS^4+2\ExpS^3+3\ExpS^2-3) + 6\rho^2\rhoc^2 \;\ExpS^{\drift} \left( \ExpS\;\frac{\ExpS^{5+\drift}-1}{5+\drift} - 2\;\frac{\ExpS^{3+\drift}-1}{3+\drift} \right. \\
& \left. + \frac{\ExpS^{1+\drift}-1}{1+\drift} \right) + \frac{3}{2} \rhoc^4\; \left( - \ExpS^{1+\drift}\;\frac{\ExpS^{5+\drift}-1}{5+\drift} + (\ExpS^{3+\drift}+1)\frac{\ExpS^{3+\drift}-1}{3+\drift} - \frac{\ExpS^{1+\drift}-1}{1+\drift} \right).
\end{aligned}
\end{equation}
The central moments of the original form can be scaled as $\mu_n = E((F_T-\fwd)^n)= (\sigma_0/\alpha)^n \nondim{\mu}_n$, and the skewness and ex-kurtosis are given as $s = \nondim{\mu}_3 / \nondim{\mu}_2^{3/2}$ and $\kappa = \nondim{\mu}_4/\nondim{\mu}_2^2-3$, respectively. For the normal SABR ($\drift=0$), further simplified expressions are obtained:
\begin{equation} \label{eq:mom_nsabr}
\nondim{\mu}_2 = w-1, \quad s = \rho (w+2)\sqrt{w-1}, \quad \kappa = (w-1) \left( \left(\frac{4\rho^2+1}5\right)(w^3+3w^2+6w+5) +1\right).
\end{equation}
\end{restatable}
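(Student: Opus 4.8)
The plan is to compute the central moments $\tilde\mu_n = E(\tilde F_S^{\,n})$ directly from the closed-form representation of $\tilde F_S$ in Equation~(\ref{eq:F_S}), treating it as the sum of two independent pieces. I would write $\tilde F_S = \rho\,U + \rhoc\,V$, where $U = e^{Z^{[(\drift-1)/2]}_S} - e^{\drift S/2}$ is driven purely by the drifted Brownian motion $Z$, and $V = X_{\ExpBM[(\drift-1)/2]_S}$ is the time-changed Brownian integral. The crucial structural fact is that $V$, conditional on the path of $Z$, is a mean-zero normal variable with variance $\ExpBM[(\drift-1)/2]_S = \int_0^S e^{2Z^{[(\drift-1)/2]}_s}\,ds$, so all odd conditional moments of $V$ vanish and the even ones reduce to Gaussian moments of $A$. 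Since $U$ is $Z$-measurable, this independence structure lets me expand each $(\rho U + \rhoc V)^n$ binomially, take conditional expectation over $X$ first (killing odd powers of $V$), and then average over $Z$.

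\textbf{The key computation: exponential-Brownian moments.}

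After the binomial expansion, every surviving term is an expectation of the form $E\big(U^a \,(\ExpBM[(\drift-1)/2]_S)^b\big)$ for small integers $a,b$, using $E(V^{2b}\mid Z) = (2b-1)!!\,(\ExpBM[(\drift-1)/2]_S)^b$. The heart of the matter is that these reduce to expectations of products of the exponential functional $A$ against powers of its integrand's endpoint value. Writing $\nu = (\drift-1)/2$ and using $e^{Z^{[\nu]}_s} = e^{Z_s + \nu s}$, each such expectation becomes a multiple time-integral of Gaussian exponential-moment factors of the form $E(\exp(c\,Z_s + c'\,Z_{s'} + \cdots))$, which are evaluated by the standard formula $E(e^{cZ_s}) = e^{c^2 s/2}$ together with the covariance $\mathrm{Cov}(Z_s,Z_{s'}) = \min(s,s')$. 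The plan is to carry out the time-ordered integrations $\int_0^S\cdots\int_0^{s_{k-1}}$ in closed form; each integration produces terms of the shape $(\ExpS^{p+\drift}-1)/(p+\drift)$ once the substitution $\ExpS = e^S$ is made, matching the structure appearing in the statement. The purely $\rho^n$ terms come from $E(U^n)$, which is an elementary moment of a shifted lognormal (recall $e^{Z^{[\nu]}_S}$ is lognormal with the appropriate mean), and reproduces the $(\ExpS-1)$-power factors multiplying $\ExpS^{n\drift/2}$.

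\textbf{Organizing the cases and the expected obstacle.}

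I would organize the calculation by $n = 2,3,4$, and within each by the number of $V$-factors (equivalently, the binomial index), so that the three-term structure of $\tilde\mu_3$ and the four-term structure of $\tilde\mu_4$ emerge transparently from the parity filtering. The $\rho^2\rhoc^2$ and $\rhoc^4$ coefficients in $\tilde\mu_4$ require the mixed expectation $E(U^2 A^{(\cdot)})$ and the pure second moment $E(A^2)$ of the exponential functional, respectively; the latter involves a double time integral $\int_0^S\!\int_0^S E(e^{2Z^{[\nu]}_s + 2Z^{[\nu]}_{s'}})\,ds\,ds'$ that must be split at $s=s'$ to handle $\min(s,s')$. The main obstacle will be the bookkeeping in $\tilde\mu_4$: correctly tracking the combinatorial prefactors ($3 = (4{-}1)!!$, the binomial coefficients, the factor $\tfrac32$ on $\rhoc^4$) while performing the nested integrals and collecting like terms so that the cross-terms telescope into the stated compact form. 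Once the general expressions are in hand, specializing to $\drift=0$ and simplifying with $\ExpS = e^S$ yields the normal-SABR formulas in Equation~(\ref{eq:mom_nsabr}); I would verify these independently as a consistency check, since at $\drift=0$ the functional $A$ and the integrals collapse to particularly clean forms.
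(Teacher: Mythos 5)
Your proposal is correct and shares the paper's outer skeleton (decompose $\nondim{F}_S = \rho U + \rhoc V$, expand binomially, condition on the volatility driver so that odd powers of $V$ vanish and even powers reduce to $(2b-1)!!$ times powers of the integrated variance), but it evaluates the resulting moments of the exponential functional by a genuinely different device. The paper routes everything through Proposition~\ref{prop:boug1}: it first derives the \emph{conditional} moments $E(X_{A_S}^{2n}\mid Z_S)$ in closed form from the $\phi$-representation and the $\Hplane{3}$ heat kernel (the function $m(u,\epsilon)$ in Appendix~\ref{sec:append2}), and then integrates these against functions of the terminal value $Z_S$ to obtain the cross-moments $E(U^a A_S^b)$. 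You instead condition on the whole path of $Z$ and compute $E(U^a A_S^b)$ directly by Fubini, writing $A_S$ as a time integral and evaluating the resulting iterated integrals of Gaussian exponential moments via $\mathrm{Cov}(Z_s,Z_{s'})=\min(s,s')$ (with the split at $s=s'$ for $E(A_S^2)$). Your route is more elementary and self-contained---it does not need the hyperbolic Bougerol identity at all, only the independence of $X$ and $Z$ and classical lognormal calculus---and it is a standard technique for moments of exponential functionals; the cost is heavier nested-integral bookkeeping for $\nondim{\mu}_4$ and the loss of the closed-form conditional moments, which the paper obtains as a byproduct and which are of independent interest (they parallel Matsumoto--Yor's formula for $E(A_S^n\mid Z_S)$ and the normal-SABR results the paper cites). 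Both methods yield the same closed forms---e.g., your double-integral expression for $E(A_S^2)$ can be checked to agree algebraically with the paper's $\tfrac{1}{2}\bigl(-w^{2+2\mu}\tfrac{w^{6+2\mu}-1}{6+2\mu}+(w^{4+2\mu}+1)\tfrac{w^{4+2\mu}-1}{4+2\mu}-\tfrac{w^{2+2\mu}-1}{2+2\mu}\bigr)$---so the proposal would go through as stated.
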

Refer to Appendix~\ref{sec:append2} for detailed derivation. Corollary~\ref{coro:mom} generalizes the moments for $S_L$ ($\rho=\pm 1$) and $S_U$ ($\drift=1$) distributions. 

The similarity of the NSVh distributions for different $\drift$ is inferred from skewness and ex-kurtosis. The result $\nondim{\mu}_k=O(w^{k(\drift+k-1)/2})$ for large $w$, at least for $k=2,3,$ and 4 implies that the leading order of skewness and ex-kurtosis are independent of $\drift$ as $s=O(w^{3/2})$ and $\kappa=O(w^4)$, as indicated in Equation~(\ref{eq:mom_nsabr}). To illustrate that, in Figure~\ref{fig:skew_kurt}, the contours of skewness and ex-kurtosis for $\drift=0$ and 1, as functions of $S$ (ex-kurtosis) and $\rho S$ (skewness) for $\rho\ge 0$, are plotted. Although the parameters for $\drift=0$ is slightly higher than those for $\drift=1$ to obtain the same skewness and ex-kurtosis levels, the contours for $\drift=0$ and $1$ are very similar implying the similarity between the normal SABR and the $S_U$ distributions. 
\begin{figure}
\caption{\label{fig:skew_kurt}
Contour Plot of Skewness (Red Dashed Line) and Excess Kurtosis (Blue Solid Line) for Varying $S\,(=\alpha^2 T)$ versus $\rho S$. The upper left triangle $(\rho S, S)$ is for $\drift=1$ ($S_U$) and the lower right triangle $(S, \rho S)$ for $\drift=0$ (normal SABR). The values for skewness are 0, 1.5, 3, 4.5, 6, and 8, and the values for excess kurtosis are 2, 7, 16, 40, 100, and 200 from the lower left to the upper right corner.}
\centering
\includegraphics[height=0.55\textwidth]{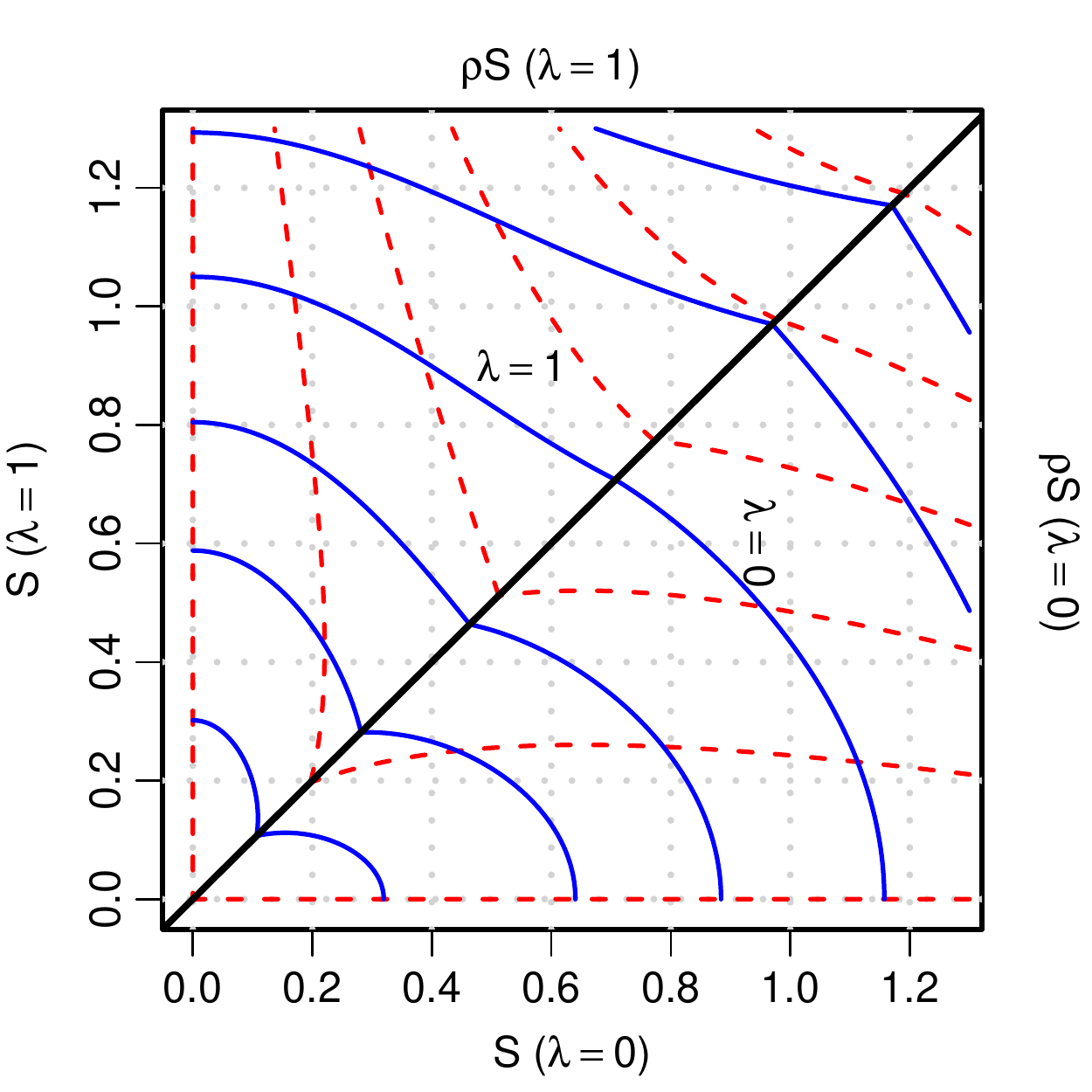}
\end{figure}

Parallel to \citet{tuenter2001algo}'s reduced moment matching method for the $S_U$ distribution, a similar method is developed for the normal SABR model. Combined with \citet{tuenter2001algo}, the two methods can quickly find an equivalent parameter set of one distribution from the other. By joining the expressions for $s$ and $\kappa$ in Equation~(\ref{eq:mom_nsabr}) through $\rho$, $\kappa$ is expressed as a univariate function of $w\ge 1$:
\begin{equation} \label{eq:mom_match}
f(w) = \frac{4s^2(w^3+3w^2+6w+5)}{5(w+2)^2} + (w-1)\left(1 + \frac{1}{5} (w^3+3w^2+6w+5)\right),
\end{equation}
for which the study numerically finds the root $w_*$ of $\kappa = f(w_*)$. It can be shown that $f(w)$ is monotonically increasing for $w\ge 1$. Therefore, the root $w_*$ would be unique if it exists. We can further bound $w_*$ by $w_m\le w_*\le w_M$ to expedite the numerical root-finding. Lower bound $w_m$ is the unique cubic root of $s^2 = (w-1)(w+2)^2$ (the $\rho=\pm 1$ case) for $w\ge 1$:
$$ w_m = 2\cosh\left(\frac13 \acosh \left( 1+\frac{s^2}{2}\right) \right).$$
Upper bound $w_M$ is obtained by plugging $w=w_m$ into Equation~(\ref{eq:mom_match}), except the $(w-1)$ term:
$$ w_M = 1+\frac{\kappa - \frac45 s^2(w_m^3 + 3w_m^2+6w_m+5)/(w_m+2)^2}{1+\frac15 (w_m^3+3w_m^2+6w_m+5)}.$$
The existence of $w_*$ is equivalent to $f(w_m) \le \kappa$. If $w_*$ exists and is found from the numerical root-finding, then the parameters can be solved as
$$ S = \log w_*, \quad \rho = \frac{s}{(w_*+2)\sqrt{w_*-1}}, \quad \text{and}\quad \frac{\sigma_0}{\alpha} = \sqrt{\frac{\mu_2\log w_*}{(w_*-1)S}}.$$

\subsection{Summary of results}
In Figure~\ref{fig:venndiagram}, the relationship of the NSVh model and other related models is shown.
In Table~\ref{tab:summary}, the results for the three important drift values, $\drift=-1$, 0, and 1, are summarized for comparison.

\begin{figure}
\caption{\label{fig:venndiagram}
The Overview of the NSVh Model in Relation to Other Previously Known Models and Distributions.}
\centering
\includegraphics[width=0.98\linewidth]{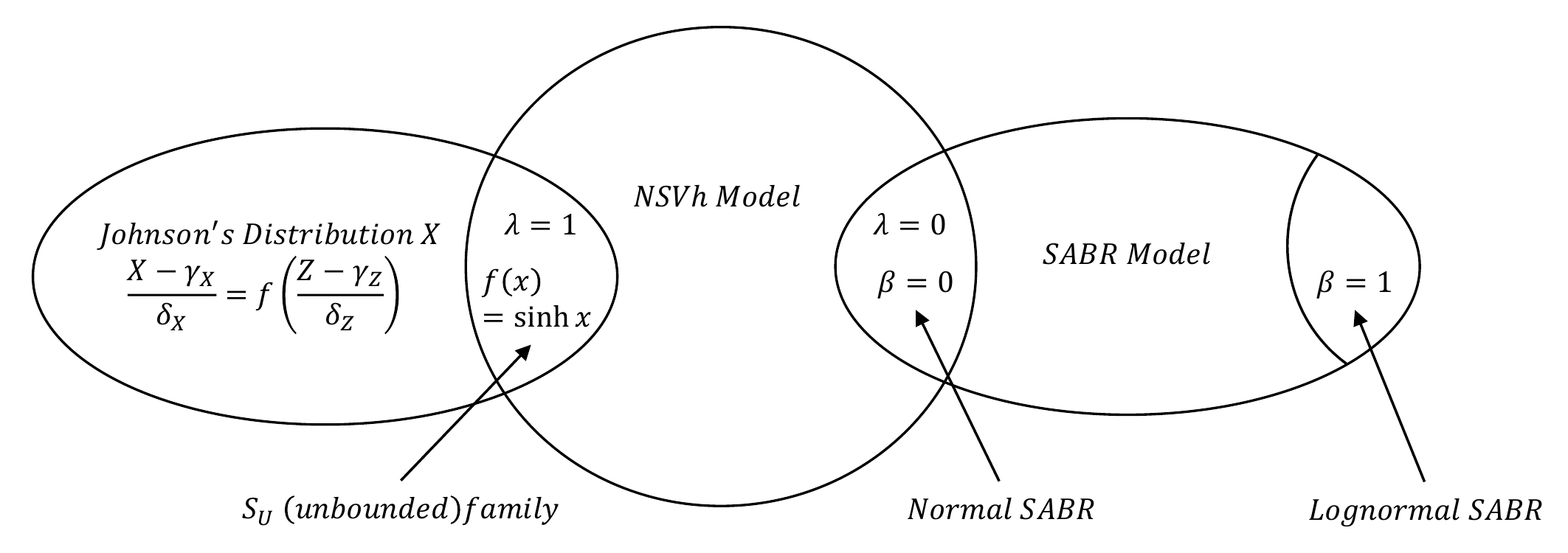}
\end{figure}

\begin{table}
\caption{Summary of the NSVh Model for the Three Key Special Cases: $\drift=-1,0$, and $1$.}
\label{tab:summary}
\begin{center}
\begin{tabular}{|c||c|c|c|} \hline
	Drift parameter	& $\drift=-1$ &  $\drift=0$ & $\drift=1$ \\ \hline \hline
	Equivalent& \multirow{2}{*}{Standard BM in $\Hplane{3}$} & Normal SABR & \multirow{2}{*}{$S_U$ distribution} \\
	model/distribution&  & Standard BM in $\Hplane{2}$ & \\ \hline
	Drifted BM for $\nondim{\sigma}_s$ & $\Zdrift[-1/2]_s$ & $\Zdrift[0]_s$ & $\Zdrift[1/2]_s$ \\ \hline 
	Terminal volatility $\nondim{\sigma}_S$ & $\exp(\Zdrift[-1]_S)$ & $\exp(\Zdrift[-1/2]_S)$ & $\exp(\Zdrift[0]_S)$ \\ \hline
	Integrated variance & $\ExpBM[-1]_S$ & $\ExpBM[-1/2]_S$ & $\ExpBM[0 ]_S$ \\ \hline
	Exact MC simulation & \multicolumn{2}{c|}{Corollary \ref{coro:jointmc}} & Corollary \ref{coro:jointmc} \& \ref{coro:jsu} \\ \hline
	\multirow{2}{*}{Vanilla option price} &  & Equation~(\ref{eq:hagan}) & Equation~(\ref{eq:jsu_opt}) \\
	 &  & (approximation) & (exact) \\ \hline
	Moments & \multicolumn{3}{c|}{Corollary \ref{coro:mom}} \\ \hline
	Moment matching & & Equation~(\ref{eq:mom_match}) &\citet{tuenter2001algo} \\ \hline
\end{tabular}
\end{center}
\end{table}

\section{Parameter Estimation from Empirical Data} \label{sec:num}
The NSVh distribution is calibrated to two empirical data sets--- 
swaption volatility smile and daily stock index return. The purpose of this exercise is to demonstrate various numerical procedures presented in this study rather than arguing that the NSVh model is superior to other SV models or heavy-tailed distributions in fitting these data. Additionally, the study shows that the two NSVh models, that is, $\drift=0$ (normal SABR) and $\drift=1$ ($S_U$), yield very similar distributions, and thus can be used interchangeably, if calibrated to the same target, such as implied volatility or moments.

\subsection{Swaption Volatility Smile}
The study obtains the US swaption market prices on March 14, 2017 from Reuters. The two heavily traded expiry--tenor pairs of the US swaption---1y1y and 10y10y---are chosen to illustrate different volatility smile shapes. To avoid the complication of the annuity price of the underlying swap, the study computes the price in the unit of the annuity from the BSM implied volatilities provided by Reuters, rather than raw dollar prices. Refer to the insets of Figure~\ref{fig:usd_swo} for the BSM implied volatilities.

\begin{figure}
\caption{\label{fig:usd_swo}
Swaption Volatility Smile in the Implied Normal Volatility (in annual basis points) Observed in the US Market on March 14, 2017: (a) 1 year into 1-year swap (1y1y) and (b) 10 years into 10-year swap (10y10y). The circles represent the volatilities implied from observed market prices, among which the black ones are ATM and ATM $\pm$ 1\% points for calibration. The solid (blue) line represents the model-implied smile curve for $\drift=1$ ($S_U$) and the dashed (red) one for $\drift=0$ (normal SABR). The insets are the implied BSM volatilities (in annual \%).
}
\centering
\includegraphics[width=0.49\textwidth]{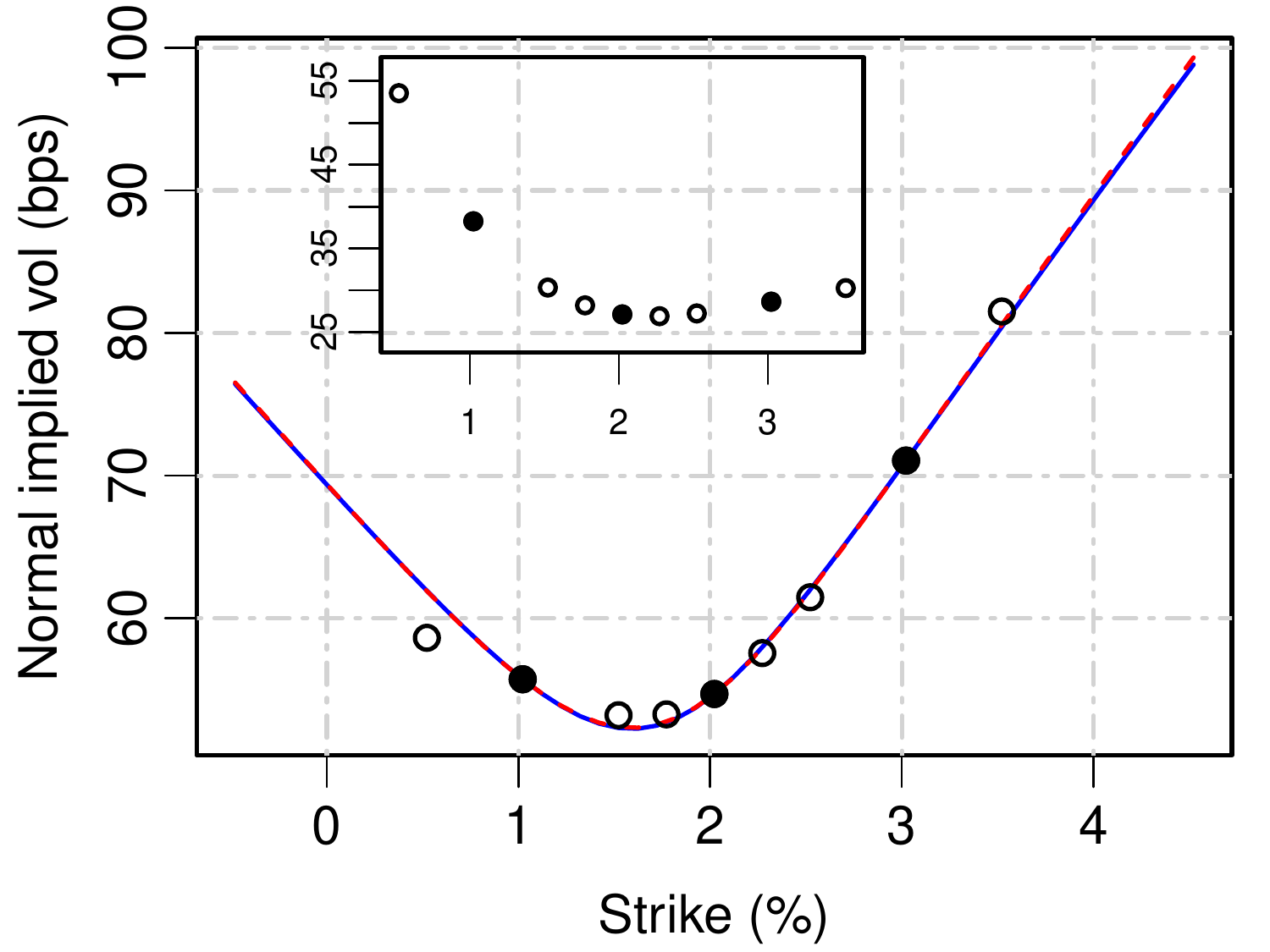} 
\includegraphics[width=0.49\textwidth]{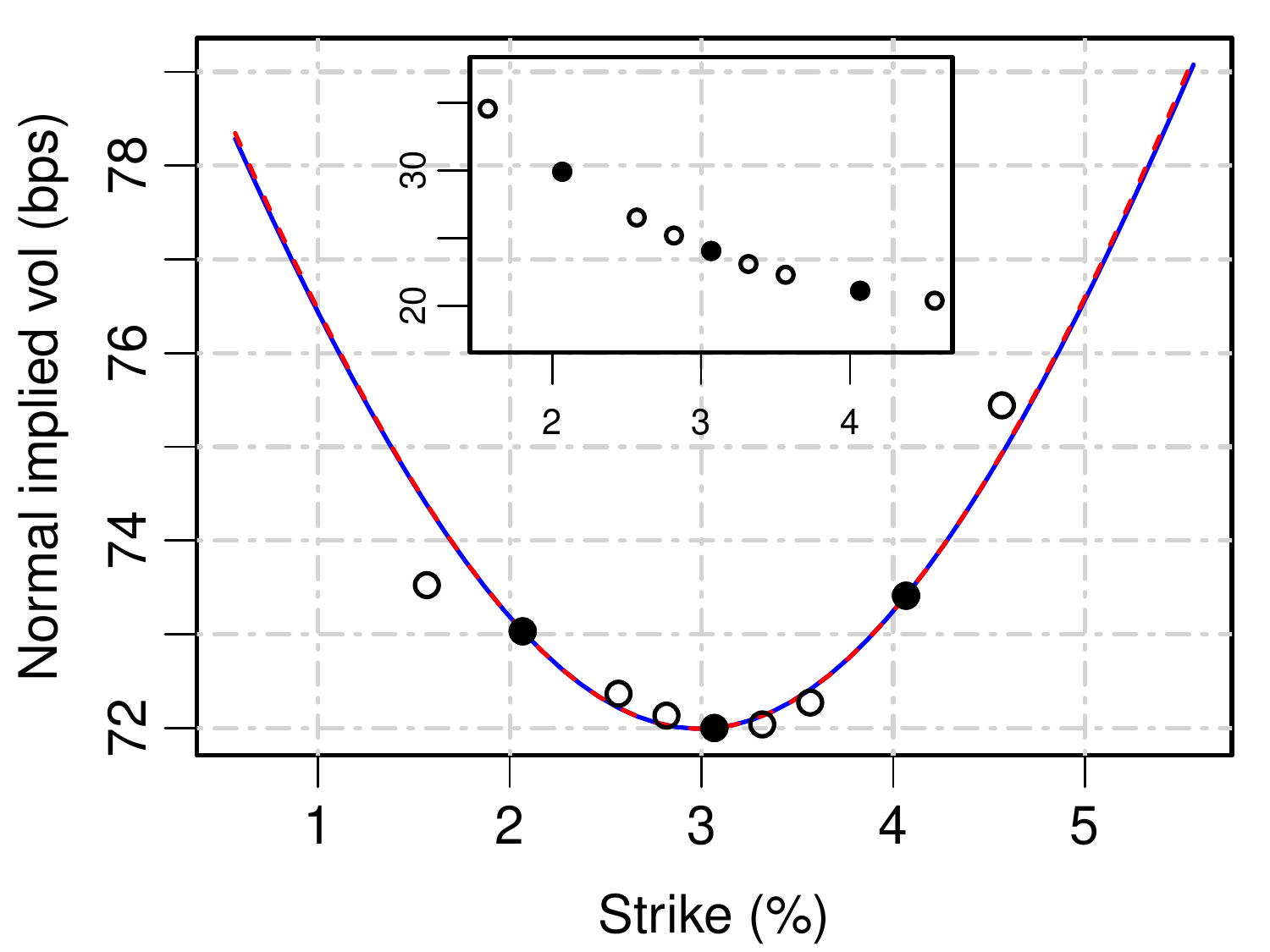}
\end{figure}

Figure~\ref{fig:usd_swo} shows the normal volatility smile implied from the market and the calibrated models. While option prices are observable for strike prices with spreads of 0, $\pm 0.25\%$, $\pm 0.5\%$, $\pm 1.0\%$, and $\pm 1.5\%$ from the forward swap rates $\fwd$, the study uses only the three spreads---0 and $\pm 1\%$ for calibration---to ensure that the calibrated parameter set---$(\sigma_0, \alpha, \rho)$---reproduces option prices at those three strike prices. For calibration, the study uses Equations~(\ref{eq:hagan}) for $\drift=0$ and (\ref{eq:jsu_opt}) for $\drift=1$. The volatility smile curves implied from the two models are indistinguishable, and thus close to each other in distributions. Table~\ref{tab:tab1} shows the calibrated parameter values for $\drift=0$ and 1 for reference.

In Table~\ref{tab:tab2}, the option prices are compared from the MC simulation of Corollary~\ref{coro:jointmc} with the option prices from the aforementioned analytic methods. In the experiment, the MC simulation of $10^6$ samples is repeated 100 times. For $\drift=1$, both Equation~(\ref{eq:jsu_opt}) and the MC method are exact, and thus the option prices from the two methods have very little difference due to the MC noise. For $\drift=0$, however, Equation~(\ref{eq:hagan}) is an approximation. The analytic prices show clear deviation from the accurate MC prices. Overall, this exercise reconfirms the possibility of the $S_U$ distribution being used as a better alternative to the normal SABR model.

\begin{table}
\caption{Parameters Calibrated to the US Swaption Volatility Smile on March 14, 2017. Refer to Figure~\ref{fig:usd_swo} for the calibration points. Equations~(\ref{eq:hagan}) and (\ref{eq:jsu_opt}) are used for the NSVh models with $\drift=0$ and $\drift=1$, respectively.
\label{tab:tab1}}
\begin{center}
\begin{tabular}{|c||r|r||r|r|} \hline
Calibrated	& \multicolumn{2}{c||}{1y1y ($T=1$)} & \multicolumn{2}{c|}{10y10y ($T=10$)} \\ \cline{2-5}
Parameters & \multicolumn{1}{c|}{$\drift=0$} & \multicolumn{1}{c||}{$\drift=1$} & \multicolumn{1}{c|}{$\drift=0$} & \multicolumn{1}{c|}{$\drift=1$} \\ \hline
$\rho\;(\%)$ & 33.503 & 32.244 & 1.697 & 1.580\\
$\alpha\;(\%)$ & 61.962 & 62.181 & 22.372 & 22.196\\
$\sigma_0\;(\%)$ & 0.533 & 0.477 & 0.691 & 0.609\\
\cline{2-5}
$\fwd\;(\%)$ & \multicolumn{2}{c||}{2.0221} & \multicolumn{2}{c|}{3.0673}\\
\hline
\end{tabular}
\end{center}
\end{table}

\begin{table}
\caption{Vanilla Options Pricing Tested Against the Parameters of Table~\ref{tab:tab1} for 10y10y Swaption. Analytic prices ($P_\textsc{ana}$) are computed from Equations~(\ref{eq:hagan}) and (\ref{eq:jsu_opt}) and MC price ($P_\textsc{mc}$) is shown relative to $P_\textsc{ana}$ with the standard deviation. The MC simulation with $10^6$ paths is repeated 100 times. Prices are in the unit of the annuity of the underlying swap.
\label{tab:tab2}}
\begin{center}
	\begin{tabular}{|c||c|c||c|c|} \hline
	$K-\fwd$ & \multicolumn{2}{c||}{$\drift=0$} & \multicolumn{2}{c|}{$\drift=1$} \\ \cline{2-5}
	(bps) & $P_\textsc{ana}$ & $P_\textsc{mc}-P_\textsc{ana}$ & $P_\textsc{ana}$ & $P_\textsc{mc}-P_\textsc{ana}$ \\ \hline \hline
-200 & 2.275E-2 & -4.1E-5 $\pm$ 1.8E-5 & 2.274E-2 & 6.4E-7 $\pm$ 1.8E-5\\
-100 & 1.506E-2 & -2.1E-5 $\pm$ 1.6E-5 & 1.506E-2 & 6.0E-7 $\pm$ 1.6E-5\\
0 & 9.083E-3 & -1.2E-5 $\pm$ 1.3E-5 & 9.083E-3 & 5.3E-7 $\pm$ 1.3E-5\\
100 & 5.108E-3 & -2.6E-5 $\pm$ 1.1E-5 & 5.108E-3 & 3.3E-7 $\pm$ 1.1E-5\\
200 & 2.807E-3 & -4.8E-5 $\pm$ 8.8E-6 & 2.804E-3 & 3.9E-7 $\pm$ 9.1E-6\\
300 & 1.567E-3 & -6.0E-5 $\pm$ 7.0E-6 & 1.559E-3 & 3.7E-7 $\pm$ 7.3E-6\\
	\hline
\end{tabular} \\ \vspace{1em}
\end{center}
\end{table}

\subsection{Daily Return of Stock Index}
We fit the NSVh distribution to the daily returns of two stock indices---the US Standard \& Poor's 500 Index (S\&P 500) and China Securities Index 300 (CSI 300). The data covers the 12-year period from the beginning of 2005 to the end of 2016. For the analysis to be easily reproducible, daily returns are computed as the return on outright index values, rather than as holding period return. 

The statistics of the daily returns are summarized in Table~\ref{tab:tab4}. S\&P 500 shows heavier tails but less skewness than CSI 300. The table also shows the fitted parameters of the NSVh distributions for $\drift=0$ and 1. In fitting, the reduced moment-matching methods of \citet{tuenter2001algo} and section~\ref{sec:mom} are used. Based on these values, the value-at-risk and expected shortfall are computed and compared to those from the normal distribution assumption and the historical data. While Equations~(\ref{eq:jsu_var}) and (\ref{eq:jsu_es}) are used for $\drift=1$, MC simulation is used to compute the risk measures for $\drift=0$. The results are shown in Table~\ref{tab:tab5}. As the return distribution has heavy tail, the value-at-risk and expected shortfall from the NSVh distributions are much closer to those from the historical data than those from the normal distribution. 
The risk measures from $\drift=0$ and 1 are very close to each other, confirming the similarity between the two distributions.

\begin{table}
\caption{Summary Statistics and Fitted Parameters of Daily Returns of S\&P 500 and CSI 300 Indices from 2005 to 2016. The number of samples, mean, variance, skewness, and excess kurtosis are denoted by $n$, $\fwd$, $\mu_2$, $s$, and $\kappa$, respectively. The mean and variance are computed from percentage returns. We assume $T=1$ for convenience. \label{tab:tab4}}
\begin{center}
\begin{tabular}{|c||r|r||r|r|} \hline
Summary statistics & \multicolumn{2}{c||}{S\&P 500} & \multicolumn{2}{c|}{CSI 300} \\ \hline \hline
$n$ & \multicolumn{2}{c||}{3020} & \multicolumn{2}{c|}{2914} \\
$\fwd$ & \multicolumn{2}{c||}{0.0282} & \multicolumn{2}{c|}{0.0417} \\
$\mu_2$ & \multicolumn{2}{c||}{1.5154} & \multicolumn{2}{c|}{3.4092} \\
$s$ & \multicolumn{2}{c||}{-0.0933} & \multicolumn{2}{c|}{-0.5075} \\
$\kappa$ & \multicolumn{2}{c||}{11.4454} & \multicolumn{2}{c|}{3.3348} \\
\hline \hline
Fitted parameters & \multicolumn{1}{c|}{$\drift=0$} & \multicolumn{1}{c||}{$\drift=1$} & \multicolumn{1}{c|}{$\drift=0$} & \multicolumn{1}{c|}{$\drift=1$} \\ \hline
$\rho\;(\%)$ & -2.042 & -1.725 & -20.454 & -18.539 \\
$\alpha\;(\%)$ & 88.533 & 84.587 & 63.782 & 61.853 \\
$\sigma_0\;(\%)$ & 99.915 & 82.538 & 166.213 & 150.167 \\
\hline
\end{tabular}
\end{center}
\end{table}

\begin{table}
\caption{Value-at-risk (VaR) and Expected Shortfall (ES) from the Normal Distribution (Normal), the Two NSVh Distributions ($\drift=0$ and $\drift=1$), and the True Values from the Dataset (Sample).\label{tab:tab5}}
\begin{center}
\begin{tabular}{|c||c|c|c|c||c|c|c|c|} \hline
	\multirow{2}{*}{Risk Measures}	& \multicolumn{4}{c||}{S\&P 500} & \multicolumn{4}{c|}{CSI 300} \\ \cline{2-9}
& Normal & $\drift=0$ & $\drift=1$ & Sample & Normal & $\drift=0$ & $\drift=1$ & Sample \\ \hline
VaR ($p=5\%$) & -1.997 & -1.825 & -1.824 & -1.832 & -2.995 & -3.032 & -3.036 & -3.007 \\
VaR ($p=1\%$) & -2.836 & -3.405 & -3.432 & -3.615 & -4.254 & -5.234 & -5.246 & -5.732 \\ \hline
ES ($p=5\%$) & -2.511 & -2.857 & -2.872 & -3.042 & -3.767 & -4.433 & -4.440 & -4.745 \\
ES ($p=1\%$) & -3.253 & -4.781 & -4.820 & -5.309 & -4.879 & -6.849 & -6.857 & -7.298 \\
\hline 
\end{tabular}
\end{center}
\end{table}

Finally, the goodness-of-fit of the $S_U$ ($\drift=1$) is illustrated by using the probability plot. In Figure~\ref{fig:index}, the theoretical $Z$-score---$Z_0^{(j)} = N^{-1}((j-1/2)/n)$ for the $j$-th ordered sample---is shown on the $x$-axis and two different sample's $Z$-scores on the $y$-axis as follows: (i) from the estimated normal distribution $Z_1^{(j)}=(X_j-\fwd)/\sqrt{\mu_2}$ and (ii) from the $S_U$ distribution computed as
$$
Z_2^{(j)} = N^{-1}(P_{\drift=1}(X_j)) = \frac{1}{\sqrt{S}}\left( \asinh\left( 
\frac{\alpha}{\rhoc\sigma_0}(X_j-\fwd) + \frac{\rho}{\rhoc} e^{S/2} \right) - \atanh\,\rho \right).
$$
Therefore, $(Z_0, Z_2)$ is understood as the \textit{$S_U$ probability plot}, while $(Z_0, Z_1)$ is the normal probability plot in the usual definition. Figure~\ref{fig:index} shows that the points under the $S_U$ probability plot are close to the $y=x$ line, indicating that the stock returns closely follow the $S_U$ distribution ($\drift=1$).

\begin{figure}
\caption{\label{fig:index}
Probability Plots of Daily returns of the S\&P 500 (left) and CSI 300 Indexes (right). The $Z$-scores under the normal distribution (black dot) and the $S_U$ distribution (red circle) in $y$-axis are plotted against the theoretical $Z$-score in the $x$-axis. The $y=x$ line (dashed blue) is for reference.}
\centering
\includegraphics[width=0.49\textwidth]{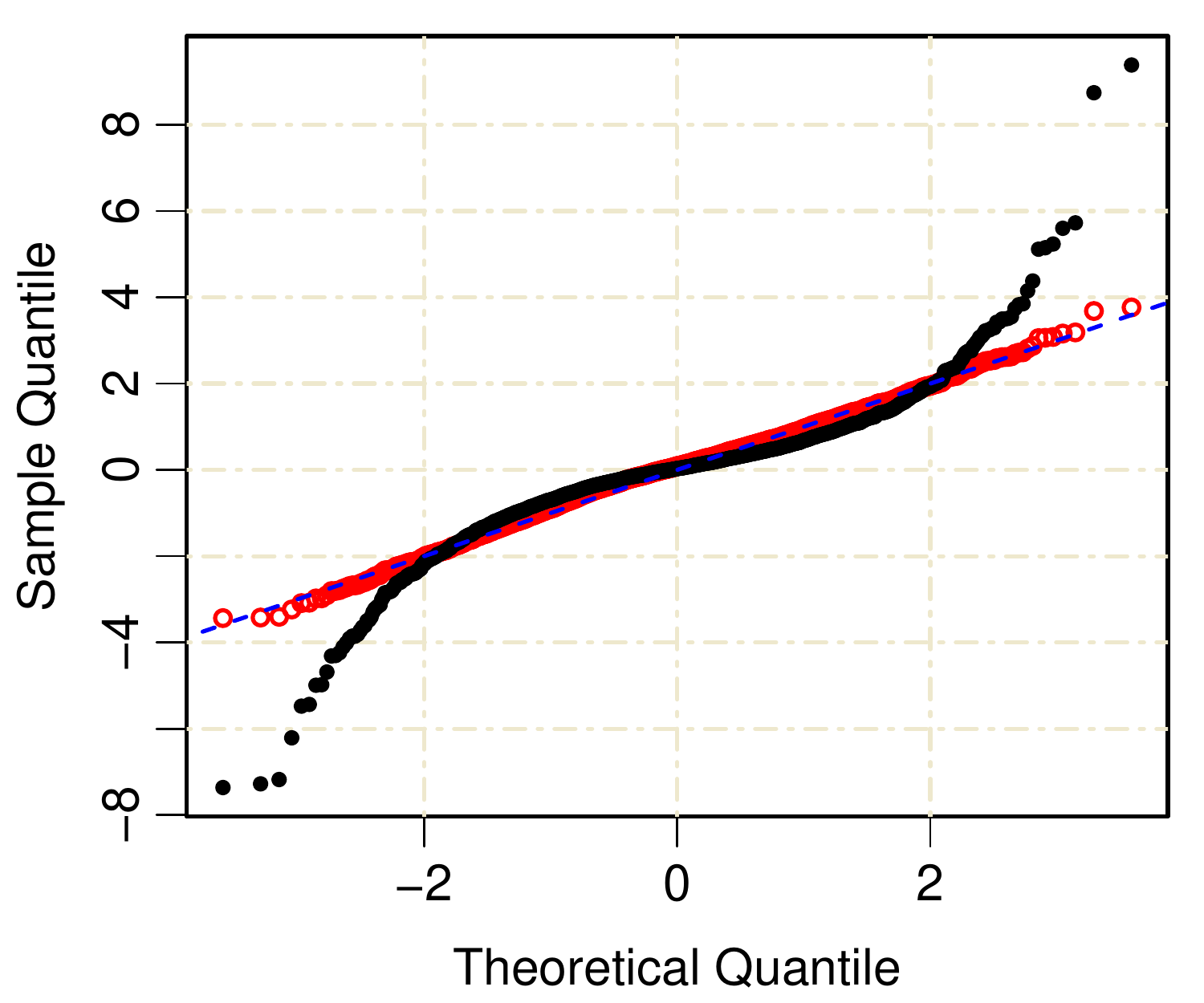} 
\includegraphics[width=0.49\textwidth]{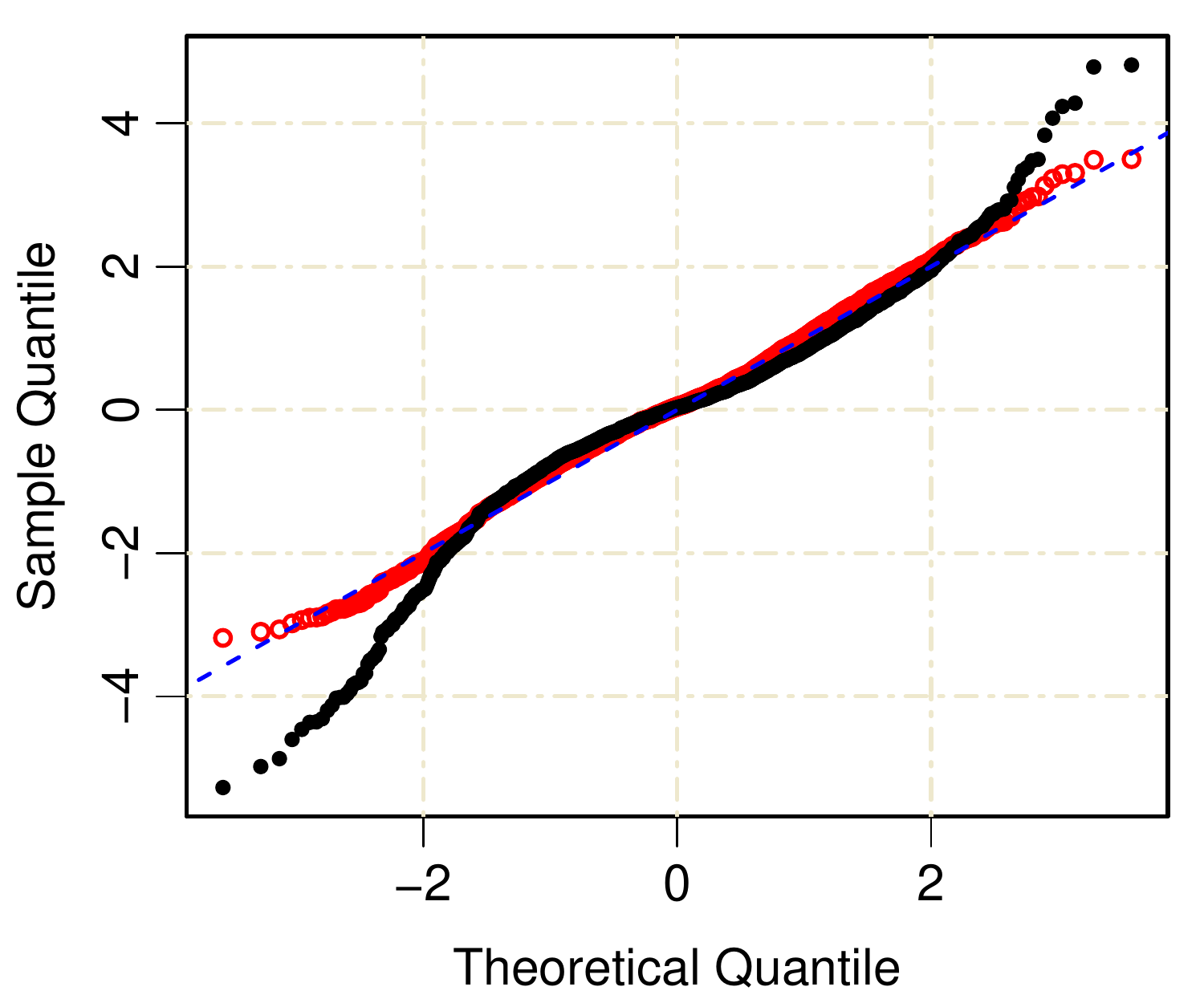}
\end{figure}

\section{Conclusion} \label{sec:conc}
The study generalizes the arithmetic Brownian motion with stochastic volatility. The NSVh process proposed in this study incorporates the SABR model and Johnson's $S_U$ distribution, which have been studied in different contexts. The SABR model is a well-known option pricing model in financial engineering, and the Johnson's $S_U$ distribution is a popular skewed and heavy-tailed distribution defined by a transformation from the normal random variable. From the generalizations of Bougerol's identity, the NSVh model is equipped with closed-form MC simulation for general cases and closed-form option formula for the $S_U$ case. This study demonstrates the usage of the model with two empirical datasets---the US swaption and daily returns distribution of the S\&P 500 and CSI 300 indexes. 

\section*{Acknowledgements}
The authors are grateful to Larbi Alili for sharing manuscripts, \citet{alili1997} and \citet{alili1997iden}, Robert Webb (editor), and Minsuk Kwak (discussant at the 2018 Asia-Pacific Association of Derivatives conference in Busan, Korea). Jaehyuk Choi would like to express gratitude to his previous employer, Goldman Sachs, for laying the foundation for the motivation of this study. Byoung Ki Seo was supported by the Institute for Information \& communications Technology Promotion (IITP) grant funded by the Ministry of Science and ICT (MSIT), Korea (No. 2017-0-01779, A machine learning and statistical inference framework for explainable artificial intelligence).
\bibliographystyle{plainnat}
\bibliography{../../@Bib/SABR,../../@Bib/normalmodel}
\begin{appendix}

\section{Proof of Proposition \ref{prop:boug1}} \label{sec:append1}
\boughyp*

\begin{proof}
Let $(x_t,y_t,z_t)$ be a three-dimensional hyperbolic BM in $\Hplane{3}$ with a drift on $z$-axis, starting at $(x_0,y_0,z_0)=(0,0,1)$:
$$ dx_t = z_t \,dX_t,\quad dy_t = z_t \,dY_t, \quad \text{and}\quad
\frac{dz_t}{z_t} = dZ_t + \left(\frac12 + \mu \right) dt,
$$
where the drift $\mu$ will be replaced by $(\drift-1)/2$ in the NSVh model. The standard BM in $\Hplane{3}$ introduced in Table~\ref{tab:hyp_geo} corresponds to $\mu=\drift=-1$.
Evidently,
$$ x_T \distequal y_T \distequal X_{\ExpBM_T} \quad\text{and}\quad 
z_T = \exp\left(\Zdrift_T\right)$$.
If we let $D_t$ be the hyperbolic distance between $(x_t, y_t, z_t)$ and the starting point $(0,0,1)$,then$$ D_t = \acosh \left( \frac12 \left( \frac{x_t^2+y_t^2}{z_t} + z_t + \frac1{z_t} \right) \right),$$
the Euclidean radius of $(x_t,y_t)$ is expressed by the function $\phi$:
\begin{equation} \label{eq:phi_r}
r_t = \sqrt{x_t^2+y_t^2} = \phi\hspace{-0.25em}\left(\Zdrift_t,\, D_t\right).
\end{equation}
The underlying BM $\Zdrift_t$ can be also interpreted as the projection of $(x_t,y_t,z_t)$ on the $z$-axis, that is, the signed hyperbolic distance from $(0,0,1)$ to $(0, 0, z_t)$. Therefore, the restriction $\Zdrift_t \le D_t$ is naturally satisfied.

A critical step of the proof is to show, for a fixed time $T$, 
\begin{equation} \label{eq:dist_hyp_euc}
D_T \distequal \sqrt{X_T^2 + Y_T^2 + (\Zdrift_T)^2} \qtext{conditional on} \Zdrift_T,
\end{equation},
which effectively means that the hyperbolic distance between $(x_T,y_T,z_T)$ and the starting point $(0,0,1)$ has the same distribution as the Euclidean distance of the underlying BMs, $(X_T, Y_T, \Zdrift_T)$, from $(0,0,0)$. Furthermore, the identity holds, conditional on $\Zdrift_T$. Based on the identity, it follows that
\begin{gather*}
\sqrt{x_T^2 + y_T^2} \;\distequal\;
\phi\hspace{-0.25em} \left(\Zdrift_T,\;\sqrt{X_T^2 + Y_T^2+(\Zdrift_T)^2}\right), \\
\text{or}\quad x_T \distequal \cos\theta \;\phi\hspace{-0.25em} \left(\Zdrift_T,\; \sqrt{X_T^2 + Y_T^2+(\Zdrift_T)^2}\right),
\end{gather*}
where $\theta$ is a uniformly distributed random angle.

Proving Equation~(\ref{eq:dist_hyp_euc}) for just one value of $\mu$ is enough because the rest follows from the Girsanov's theorem. Deviating from the original proof, $\mu=-1$ is chosen to take advantage of the $\Hplane{3}$ heat kernel. From the derivative of Equation~(\ref{eq:phi_r}), $r_T dr_T = z_T \sinh D_T dD_T$, the joint PDF on $r_T$ and $z_T$
is obtained from $p_3(t,D)$:
\begin{equation*}
\text{Prob}(r_T\in dr_T, z_T\in dz_T) = p_3(T,D_T) \frac{2\pi r_T\; dr_T dz_T}{z_T^3} = 
\frac1{\sqrt{2\pi T^3}} D_T\; e^{-\frac{1}{2T}(T^2+D_T^2)} \frac{dD_T dz_T}{z_T^2}.
\end{equation*}
From $dz_T/z_T = d\Zdrift[-1]_T$, it is also seen that
$$\text{Prob}(z_T\in dz_T) = \frac{1}{\sqrt{2\pi T}}e^{-\frac{1}{2T} Z_T^2}\, \cdot \frac{dz_T}{z_T}.$$
Therefore, the conditional probability is given as
$$
\text{Prob}(r_T\in dr_T|Z_T) = \frac{\text{Prob}(r_T\in dr_T, z_T\in dz_T)}{\text{Prob}(z_T\in dz_T)} = \frac{D_T}{T}\; e^{-\frac{1}{2T}\left(D_T^2+T^2-Z_T^2\right)} \frac{dD_T}{z_T}
=\frac{D_T}{T}\; e^{-\frac{1}{2T}\left(D_T^2-(\Zdrift[-1]_T)^2\right)} dD_T.
$$
The probability can be interpreted as the conditional probability $\text{Prob}(\sqrt{X_T^2+Y_T^2+(\Zdrift[-1]_T)^2}\in d D_T \;|\; \Zdrift[0]_T)$.
\end{proof}

\section{Derivation of moments of NSVh distribution}\label{sec:append2} 
\nsvhmoments*
\begin{proof}
We first compute the moments, conditional on $\Zdrift_T$:
\begin{align*}
E\Big(X_{\ExpBM_T}^{2n} \Big| \Zdrift_T \Big) &= E(\cos^{2n}\theta) \; E\left(\phi^{2n}\!\left(\Zdrift_T,\; \sqrt{X_T^2 + Y_T^2+(\Zdrift_T)^2}\right)\right) \\
&= \frac{(2n-1)!!}{n!} T^n\, e^{nu\sqrt{T}} \int_{u}^\infty r e^{-\frac12 r^2} \left( \cosh(r\sqrt{T})  - \cosh(u\sqrt{T}) \right)^n\; dr,
\end{align*}
where $u = |\Zdrift_T|/\sqrt{T}$ and $(2n-1)!! = (2n-1)(2n-3)\cdots3\cdot 1$.
This formula is comparable to the formula for $E\Big(\big({\ExpBM_T}\big)^n \Big| \Zdrift_T \Big)$ given in Proposition 5.3 in \citet{matsuyor2005exp1}. The first two values are computed in closed form:
\begin{align*}
E\Big( X_{\ExpBM_T}^2 \Big| \Zdrift_T \Big) &= T e^{u\sqrt{T}} m(u,\sqrt{T})  \\
E\Big( X_{\ExpBM_T}^4 \Big| \Zdrift_T \Big) &= 3 T^2 e^{2u\sqrt{T}} \left( m(u,2\sqrt{T}) - \cosh(u\sqrt{T})\,m(u,\sqrt{T})\right) \\
\text{where}\quad & m(u,\epsilon) = \frac{N(u+\epsilon)-N(u-\epsilon)}{2\epsilon\;e^{-\frac12 \epsilon^2}\; n(u)}.
\end{align*}
From these, the first two conditional moments of $\ExpBM_T$ are trivially obtained as
$$ E\Big(\ExpBM_T \Big| \Zdrift_T \Big) = E\Big(X_{\ExpBM_T}^2 \Big| \Zdrift_T\Big)
\qtext{and}
E\Big(\big(\ExpBM_T\big)^2 \Big| \Zdrift_T \Big) = \frac13 E\Big( X_{\ExpBM_T}^{\;4} \Big| \Zdrift_T\Big).$$ 
The same results for $\drift=0$ are derived in \citet{kennedy2012prob} in the context of the normal SABR model.

The unconditional moments of $X_{\ExpBM_S}$ are given as
\begin{align*}
E\Big(X_{\ExpBM_S}^{\;\;2}\Big) &= \frac{\ExpS^{2+2\mu} - 1}{2+2\mu}\\
E\Big(X_{\ExpBM_S}^{\;\;4}\Big) &= 
\frac{3}{2} \left( -\ExpS^{2+2\mu}\;\frac{\ExpS^{6+2\mu}-1}{6+2\mu} + (\ExpS^{4+2\mu}+1)\frac{\ExpS^{4+2\mu}-1}{4+2\mu} - \frac{\ExpS^{2+2\mu}-1}{2+2\mu} \right),
\end{align*}
where $\ExpS = e^{S}$. Subsequently, the expressions for the moments follow from
\begin{align*}
E\Big(\nondim{F}_S^{2}\Big) &= E\Big( \rho^2( e^{\Zdrift[\mu]_S} -  e^{\drift S/2} )^2 + \rhoc^2 E\Big(X_{\ExpBM_S}^{\;\;2}\Big| \Zdrift_T \Big)\; \Big) \\
E\Big(\nondim{F}_S^{3}\Big) &= E\Big( \rho^3( e^{\Zdrift[\mu]_S} -  e^{\drift S/2} )^3 + 3 \rho \rhoc^2( e^{\Zdrift[\mu]_S} -  e^{\drift S/2} )\,E\Big(X_{\ExpBM_S}^{\;\;2}\Big| \Zdrift_T \Big)\; \Big) \\
E\Big(\nondim{F}_S^{4}\Big) &= E\Big( \rho^4( e^{\Zdrift[\mu]_S} -  e^{\drift S/2} )^4 + 6 \rho^2 \rhoc^2( e^{\Zdrift[\mu]_S} -  e^{\drift S/2} )^2\,E\Big(X_{\ExpBM_S}^{\;\;2}\Big| \Zdrift_T \Big) + \rhoc^4 E\Big(X_{\ExpBM_S}^{\;\;4}\Big| \Zdrift_T \Big)\; \Big)
\end{align*}
and the substitution $\mu = (\drift - 1)/2$.
\end{proof}

\end{appendix}
\end{document}